\theoremstyle{plain}
\newtheorem{lemma}{Lemma}
\newtheorem{theorem}[lemma]{Theorem}
\newtheorem{observation}[lemma]{Observation}
\newenvironment{proof}{\vspace{-0.05in}\noindent{\bf Proof:}}%
{\hspace*{\fill}$\Box$\par}
\par\vspace{4mm}}
\title{Faster Algorithms for All Pairs Non-decreasing Paths Problem}
\author[1]{Ran Duan \thanks{duanran@mail.tsinghua.edu.cn}}
\author[1]{Ce Jin \thanks{jinc16@mails.tsinghua.edu.cn}}
\author[1]{Hongxun Wu \thanks{wuhx18@mails.tsinghua.edu.cn}}
\affil[1]{Institute for Interdisciplinary Information Sciences, Tsinghua University}
\date{}
\begin{document}

\maketitle

%TODO mandatory: add short abstract of the document
\begin{abstract}
In this paper, we present an improved algorithm for the All Pairs Non-decreasing Paths (APNP) problem on weighted simple digraphs, which has running time $\tilde{O}(n^{\frac{3 + \omega}{2}}) = \tilde{O}(n^{2.686})$. Here $n$ is the number of vertices, and $\omega < 2.373$ is the exponent of time complexity of fast matrix multiplication [Williams 2012, Le Gall 2014]. This matches the current best upper bound for $(\max, \min)$-matrix product [Duan, Pettie 2009] which is reducible to APNP. Thus, further improvement for APNP will imply a faster algorithm for $(\max, \min)$-matrix product. The previous best upper bound for APNP on weighted digraphs was $\tilde{O}(n^{\frac{1}{2}(3 + \frac{3 - \omega}{\omega + 1} + \omega)}) = \tilde{O}(n^{2.78})$ [Duan, Gu, Zhang 2018]. We also show an $\tilde{O}(n^2)$ time algorithm for APNP in undirected graphs which also reaches optimal within logarithmic factors.
\end{abstract}

\section{Introduction} \label{intro}

Given a directed or undirected graph $G=(V,E)$ with arbitrary real edge weights, a \textit{non-decreasing path} is a path on which edge weights form a non-decreasing sequence~\cite{minty1958a}. We define the weight of a non-decreasing path to be the weight of its last edge, which we want to minimize. The motivation of this definition comes from train scheduling \cite{williams2010nondecreasing}.
Suppose each train station is mapped to a vertex of a directed graph, and a train from station $v_1$ to station $v_2$ scheduled at time $w$ is mapped to a directed edge $(v_1, v_2)$ with weight $w$.
If we neglect the time trains spent on their way, a trip from $s$ to $t$ is just a non-decreasing path from $s$ to $t$ in the constructed graph, and the earliest time arriving at $t$ equals the minimum weight of such non-decreasing path. If we consider the train starts from $v_1$ at time $w_1$ and arrives at $v_2$ at time $w_2$, we can add a vertex $u$ and two edges $(v_1,u), (u,v_2)$, then gives edge weights $w_1,w_2$ on them, respectively.

The \textit{Single Source Non-decreasing Paths} (SSNP) problem, first studied by Minty \cite{minty1958a} in 1958, is to find the minimum weight non-decreasing path from a given source $s$ to all $t\in V$. The first linear time algorithm for SSNP problem in RAM model was given by Vassilevska Williams~\cite{williams2010nondecreasing}. She also gave an $O(m \log \log n)$ time algorithm in the standard addition-comparison model. ($m$ is the number of edges.)

The \textit{All Pairs Non-decreasing Paths} (APNP) problem is the all-pairs version of the above problem, asking for the minimum weight non-decreasing path between \textit{all} pairs of vertices in the graph.
Vassilevska Williams~\cite{williams2010nondecreasing} gave the first truly sub-cubic time algorithm of APNP.
The original time complexity of the algorithm was $\tilde{O}(n^{(15 + \omega)/6})$, obtained by using an $\tilde O(n^{2+\omega/3})$-time $(\min, \le)$-matrix product algorithm~\cite{vassilevska2009all} as a subroutine. Here $\omega < 2.373$ is the exponent of time complexity of fast matrix multiplication~\cite{Coppersmith:1987:MMV:28395.28396,williams2012multiplying,le2014powers}, and the $(\min,\le)$-product of two matrices $A,B$ is defined as $C_{i,j} =\min_{k} \{B_{k,j}: A_{i,k} \le B_{k,j}\}$.
It can be improved to $\tilde O(n^{(9+\omega)/4})$ time by using a faster $(\min, \le)$-matrix product algorithm in $\tilde O(n^{(3+\omega)/2})$ time implied in~\cite{duan2009fast}.
Recently, Duan et al. \cite{duan2018improved} obtained a faster algorithm for APNP in $\tilde{O}(n^{\frac{1}{2}(3 + \frac{3 - \omega}{\omega + 1} + \omega)}) = \tilde{O}(n^{2.78})$, by using a simple balancing technique introduced in~\cite{duan2009fast}.
We also adapt this technique in our algorithm. 

Computing APNP is at least as hard as the \textit{All Pair Bottleneck Paths} (APBP) problem~\cite{williams2010nondecreasing}, which asks for the maximum bottleneck path between every pair of vertices, where the bottleneck of a path is defined as the minimum weight (capacity) among all edges on this path. The current fastest algorithm for APBP runs in $\tilde O(n^{(3+\omega)/2}) = \tilde{O}(n^{2.686})$ time \cite{duan2009fast}. Our algorithm for APNP matches this running time, so any further improvement on APNP will imply a faster algorithm for APBP as well.

The \textit{vertex-weighted} APNP problem on directed graphs, a restricted version of APNP, is computationally equivalent to the
problem of \textit{Maximum Witness for Boolean Matrix Multiplication} (MWBMM) \cite{williams2010nondecreasing}. 
%(Note that the complexity of computing MWBMM is at least $\Omega(n^\omega)$ \cite{shapira2011all}.) 
An algorithm of $O(n^{2+\mu})$
time for MWBMM was given by Czumaj et~al.~\cite{czumaj2007faster}, where $\mu$ satisfies the equation
$\omega(1, \mu, 1) = 1 + 2\mu$ and $\omega(1, \mu, 1)$ is the exponent of multiplying an $n \times  n^\mu$ matrix with
an $n^\mu \times n$ matrix. Currently, the best bounds on rectangular matrix multiplication by Le Gall and Urrutia
\cite{gall2018improved} imply that $\mu < 0.5286$. 

%The algorithms mentioned above (and our algorithm for directed graphs) heavily rely on fast matrix multiplication or fast rectangular matrix multiplication \cite{Coppersmith:1987:MMV:28395.28396,gall2018improved,huang1998fast,LeGall:2012:FAR:2417500.2417821,le2014powers,williams2012multiplying} for truly subcubic running time. %A big open question is whether the well-known \textit{All Pairs Shortest Paths} (APSP) problem (equivalent to the $(\min,+)$ matrix product problem) can be improved to truly subcubic time as well.

%APNP problem can also be defined on undirected graphs, which we will discuss in Appendix \ref{undirectedsection}.

\subsection{Our results}
In this paper we describe a faster algorithm for directed APNP problem running in $\tilde O(n^{(3 + \omega)/2})$ time, which reaches optimal if the algorithm for APBP cannot be improved. %It matches the current best upper bound of $(\min, \le)$ matrix product, which is a special case of APNP.  
\begin{theorem}\label{mainthmdirected}
The all pairs non-decreasing paths (APNP) problem on directed simple graphs can be solved in $\tilde{O}(n^{(3 + \omega)/2})$ time.
\end{theorem}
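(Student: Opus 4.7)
The plan is to adapt the balancing technique of Duan and Pettie, originally developed to obtain the $\tilde O(n^{(3+\omega)/2})$-time bound for $(\max,\min)$-matrix product (equivalently, $(\min,\le)$-product), to the setting of non-decreasing paths. Since a single $(\min,\le)$-product of two $n\times n$ matrices can be computed in $\tilde O(n^{(3+\omega)/2})$ time \cite{duan2009fast}, to reach the target exponent it suffices to reduce APNP to a polylogarithmic number of $(\min,\le)$-products, together with auxiliary rectangular Boolean matrix products whose total cost fits within the same bound.

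First I would sort all edges in non-decreasing order of weight. For a pair $(u,v)$, the desired answer is the minimum weight $w$ of an in-edge $(x,v)$ such that $u$ can reach $x$ along a non-decreasing path whose maximum edge weight is at most $w$. The algorithm processes edges in sorted order while maintaining, for each pair of vertices, the minimum possible weight of the last edge on a non-decreasing path between them, using only edges already processed. When a new edge $(x,v)$ of weight $w$ is processed, the answer for any still-unresolved pair $(u,v)$ can be set to $w$ whenever $u$'s current last-edge weight to $x$ is at most $w$. This update is precisely the filtering condition of a $(\min,\le)$-product.

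Second, to control the total number of such products, I would batch the sorted edges into blocks using a balancing parameter $\ell$. Within a block the propagation of new reachability is handled by direct enumeration over the block's edges, exploiting the fact that each of the $n^2$ pairs is completed at most once, while across block boundaries I would invoke a $(\min,\le)$-product (or a rectangular variant) that consolidates the batched updates and advances the global last-edge-weight matrix. Choosing $\ell$ so that the within-block enumeration cost is balanced against the across-block $(\min,\le)$-product cost gives a total running time of $\tilde O(n^{(3+\omega)/2})$; this mirrors the balancing in \cite{duan2018improved} but is tighter because only a constant (or logarithmic) number of $(\min,\le)$-products are invoked, rather than a product cost that is itself suboptimal.

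The hard part will be enforcing the non-decreasing condition across block boundaries. Unlike the bottleneck setting, where only the minimum edge on a path is relevant, here we must track the specific last-edge weight on each recorded sub-path, because any future edge extending that sub-path must have weight at least as large. Composing the aggregated last-edge-weight matrix from previous blocks with the edges of the current block must be done so that the concatenated path is non-decreasing globally, which is exactly what the $(\min,\le)$-product models; but proving that a single such product per block suffices to simulate the effect of all within-block propagations requires a careful invariant-based correctness argument. Once this composition is justified and the block size is chosen to equate the two cost terms, the final running time analysis is a routine calculation.
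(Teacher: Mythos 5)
Your reduction to a small number of $(\min,\le)$-products glosses over exactly the step that blocks this plan, and the ``routine calculation'' at the end does not come out to $\tilde O(n^{(3+\omega)/2})$. The cross-boundary composition you flag as the hard part is in fact the easy part: it is precisely what the $(\min,\le)$-product semantics computes. The real difficulty is within-block propagation. A pair $(u,v)$ can be completed by an edge inside a block and then need to be extended by a later edge of the \emph{same} block leaving $v$; such an update is not captured by a product performed at the block boundary (the path was unknown then), and handling it by ``direct enumeration over the block's edges'' costs, per completed pair, the out-degree of the endpoint inside the block, which can be as large as $\min(\ell,n)$. Summed over the $n^2$ completions this is $\Theta(n^2\min(\ell,n))$, while the boundary products cost about $n^2/\ell$ times the per-product cost. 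If you use only polylogarithmically many full-size $(\min,\le)$-products, then $\ell=\tilde\Theta(n^2)$ and the enumeration term alone is $\tilde\Theta(n^3)$; if you shrink $\ell$ to tame the enumeration, you need polynomially many products. Optimizing this one-level trade-off (with the degree thresholds you would be forced to introduce) is essentially the algorithm of Duan, Gu and Zhang, and it tops out at $\tilde O\bigl(n^{\frac12(3+\frac{3-\omega}{\omega+1}+\omega)}\bigr)=\tilde O(n^{2.78})$, not $\tilde O(n^{(3+\omega)/2})$.

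The paper gets past this with two ideas your sketch is missing. First, instead of one level of blocks it partitions edges recursively by the bits of their weights into $O(\log n)$ levels, so that at each level the edges between high-out-degree and high-in-degree vertices (high-high edges) are deferred to a single rectangular product at the half-way point of that weight range, with matrix dimensions shrinking with the level so that all products total $\tilde O(n^{t+\omega})$. Second, and crucially, edges from high-out-degree to low-in-degree vertices (high-low edges) can neither be deferred nor enumerated; the paper handles them with a dynamically maintained Boolean product $C=AB$, updated in $O(n^{1-t})$ time per newly settled pair because the head has low in-degree, together with a waiting-list mechanism, so the search can skip a high-degree vertex's out-edges whenever $C$ certifies that no new optimal pair would be found, charging each $O(n^{1-t})$ scan to a distinct optimal pair with the current prefix. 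Balancing the resulting $\tilde O(n^{3-t})$ relaxation cost against the $\tilde O(n^{t+\omega})$ matrix-multiplication cost at $t=(3-\omega)/2$ is what yields $\tilde O(n^{(3+\omega)/2})$; without the recursion and the high-low machinery, your scheme has no comparable charging argument, so the key claim of your third paragraph is unsupported.
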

As in Dijkstra search~\cite{dijkstra1959note} we can maintain a priority queue of current non-decreasing paths we have found, then the minimum unvisited one is the optimal paths between its endpoints. Every time we visit an optimal path, we ``relax'' all edges following that path. In~\cite{duan2018improved} they partition the edge set into some parts by increasing order. For low-degree vertices in one part, trivially relax all of its outgoing edges, while for high-degree ones, use matrix multiplication to accelerate. Our algorithm adapt this idea, but we recursively divide the edges to $O(\log n)$ levels. In order the optimize the running time, a careful analysis of matrix multiplication is needed, and we need new ideas to use fast matrix multiplication to ``predetermine'' some of the useless edges from high-degree vertices.

We also give an $\tilde O(n^2)$ time algorithm for undirected APNP, which reaches optimal within logarithmic factors.
\begin{theorem}
\label{mainthmundirected}
The all pairs non-decreasing paths (APNP) problem on undirected simple graphs can be solved in $\tilde{O}(n^2)$ time.
\end{theorem}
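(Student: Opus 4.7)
\medskip
\noindent\textbf{Proof plan.}
The plan is to sweep through the edges of $G$ in non-decreasing order of weight while maintaining, for each vertex $v$, a set $A_v\subseteq V$ of sources $s$ for which $d(s,v)$ has already been determined; this set is initialized to $A_v=\{v\}$. When an edge $\{u,v\}$ of weight $w$ is processed, the intended invariant is that $A_u$ and $A_v$ contain exactly those sources settled by edges of weight strictly less than $w$, so every $s\in A_u\setminus A_v$ now witnesses a non-decreasing path $s\leadsto u\to v$ of last-edge weight $w$; the algorithm records $d(s,v):=w$ and inserts $s$ into $A_v$, and symmetrically updates $A_u$ from $A_v\setminus A_u$. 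Because several edges can share a common weight $w$ and propagations may chain across them, the edges of weight $w$ will be handled in a batch: I compute the connected components of the subgraph induced by the weight-$w$ edges and, for every such component $C$, uniformly set $A_v:=\bigcup_{u\in C} A_u$ for each $v\in C$, which reaches the fixed point of propagation at this weight level.

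Correctness will then follow by induction on the weight ordering from the standard recursive characterization
\[
d(s,v)\;=\;\min\bigl\{\, w(u,v) : \{u,v\}\in E,\ d(s,u)\le w(u,v) \,\bigr\},
\]
combined with the within-batch closure at each weight class.

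For the running time, the central accounting is that $|A_v|\le n$ for every $v$, so only $O(n^2)$ insertions can occur over the entire execution. To bring this within $\tilde{O}(n^2)$ total time, each edge's update must be charged only to the new insertions it triggers (plus polylogarithmic overhead). I plan to attach a random XOR-based fingerprint to every $A_v$ so that the inert case $A_u=A_v$ can be detected in $O(1)$ time with high probability and such edges dismissed on sight; for the remaining edges I would employ a data structure---such as a persistent balanced BST over source indices or a bit-parallel representation paired with a dynamic list of live positions---that enumerates the symmetric difference $A_u\triangle A_v$ in time $\tilde{O}(|A_u\triangle A_v|+1)$.

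The main obstacle I anticipate is tightening this set-operation analysis so as to avoid a naive $O(|A_u|+|A_v|)$ cost per edge, which would degrade to $\Theta(nm)=\Theta(n^3)$ in the dense regime. Two complementary tactics look promising: a potential-function argument that charges each membership check against an upcoming insertion, and a specialized structure in which each insertion into $A_v$ is timestamped so that only sources added to $A_v$ since its last synchronization with $A_u$ need be re-examined along the edge $\{u,v\}$. With either in place the total cost sums to $\tilde{O}(n^2)$, matching the trivial $\Omega(n^2)$ output-size lower bound up to polylogarithmic factors.
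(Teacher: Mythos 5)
Your overall framework is the same as the paper's: insert edges in non-decreasing weight order, maintain for each vertex the set of sources that can already reach it by a non-decreasing path (equivalently, a column of a 0--1 reachability matrix), record $d(s,v)=w$ exactly when $s$ first enters $A_v$, and observe that the total number of insertions is $O(n^2)$ so that it suffices to process each edge in time $\tilde{O}(1+\text{number of new insertions it causes})$. Your handling of equal weights by taking componentwise unions within each weight class is a legitimate alternative to the paper's reduction to distinct weights (Appendix~\ref{equal-weight}), and your correctness argument via the recursive characterization is essentially the paper's Lemma~\ref{unlemma}.

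The genuine gap is the data structure realizing the primitive you need: enumerating $A_u\triangle A_v$ (equivalently, repeatedly finding the first position where the two bit-strings differ) in $\tilde{O}(|A_u\triangle A_v|+1)$ time. You explicitly flag this as an unresolved obstacle, and the tactics you sketch do not close it. An XOR fingerprint only certifies equality; it gives no handle on \emph{where} the strings differ. A word-packed bitmap scan costs $\Theta(n/\log n)$ per edge, i.e.\ $\Theta(n^3/\log n)$ overall in the dense regime. The timestamp idea does not apply, since each edge is examined only once in the sweep, and a potential-function argument is an accounting device, not a mechanism for locating differences. A plain persistent balanced BST also does not support output-sensitive symmetric-difference enumeration without further augmentation. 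What is needed is a hierarchical equality structure: the paper uses the deterministic dynamic-string data structure of \cite{mehlhorn1997maintaining}, under which an equality test and a binary search (via \textsc{Split}/\textsc{Equal}) locate the first differing bit in polylogarithmic time, so each while-loop iteration flips one $0$ to $1$ and costs $\mathrm{poly}\log n$; a randomized substitute would be, for each vertex, a fixed-shape balanced tree over positions $1,\dots,n$ storing a hash fingerprint in every node, descending only into children whose fingerprints disagree. With such a structure specified, your plan does yield $\tilde{O}(n^2)$; without it, the time bound is not established.
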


\subsection{Organization of this paper}
Section~\ref{prelim},~\ref{BT},~\ref{AA} will discuss our APNP algorithm for directed graphs. In Section~\ref{prelim}, we introduce some terminologies used throughout this paper, and discuss how to recursively divide the edges. Then Section~\ref{BT} and~\ref{AA} will discuss our APNP algorithm for directed graphs in detail, where Section~\ref{BT} explains the main techniques used, and Section~\ref{AA} describes our whole algorithm and its analysis using procedures in Section~\ref{BT} as subroutines. The algorithm for undirected graphs is included in Section~\ref{undirectedsection}.
The graphs we considered in these sections have distinct edge weights, so generalization of our algorithm to equal edge weights is discussed in Appendix~\ref{equal-weight}. 

\section{Preliminaries}\label{prelim}

\subsection{Basic definitions}
\paragraph*{APNP problem}
Let $G = (V,E)$ be a directed simple graph with edge weight $w(e)$ for each edge $e \in E$. We denote $n = |V|$ and $m = |E| = O(n^2)$. %There may be multiple edges between two vertices. 

A path is a sequence of edges $e_1, e_2, \dots, e_l$. A non-decreasing path is a path satisfying $\forall 1\leq i\leq l-1$, $w(e_i) \leq w(e_{i+1})$, and the weight of this non-decreasing path is defined to be $w(e_l)$, the weight of the last edge.
All pairs non-decreasing paths problem asks to determine the minimum weight non-decreasing path between every pair of vertices. 
Let $OPT(i,j)$ denote the optimal (minimum) non-decreasing path between $i$ and $j$. 
In contrast, during the algorithm, we use $d(i,j)$ to denote the current minimum answer that our algorithm has found so far. For convenience, $OPT(i,j)$ and $d(i,j)$ also denote the weight of that path, and $w(j,k)$ denotes the weight of the edge $(j,k)$. 

\paragraph*{Notation for String and Subgraph}

Without loss of generality\footnote{We include the proof in appendix \ref{equal-weight} for completeness.}, we assume all edges have distinct edge weights ranged from $0$ to $2^b-1$, where  $b = \lceil \log_2 |E|\rceil \le 2\log_2 n + O(1)$, so $2^b=O(n^2)$. Then every edge weight $w(e)$ corresponds to a $b$-bit 0-1 string $[w(e)]$ which is its binary representation, with the rightmost bit being the lowest bit.

To distinguish between values and strings, string $s$ is writen as $[s]$.
$LCP([w_1], [w_2])$ (which is also a binary string) is the longest common prefix of $[w_1]$ and $[w_2]$. For example $LCP([100],[101])=[10]$, and $LCP([000],[100])=[\ ]$(empty string).
$|[w]|$ denotes the length of the string $[w]$, and $[x][y]$ denotes the concatenation of two strings $[x]$ and $[y]$.
$[x] < [y]$ means that the lexicographical order of $[x]$ is smaller. For example, $[0111]<[101]<[1010]$. 

We define $E_{[x]} = \{e \in E \ \mid \  [w(e)]\text{ has prefix }[x]\}$ to be the set of all edges whose weight has prefix $[x]$ (also call the edges have prefix $[x]$), and similarly the subgraph $G_{[x]} = (V, E_{[x]})$. For convenience, an edge set or a subgraph with subscript $[x]$ means all of the edges in it have prefix $[x]$.

\paragraph*{Rectangular Matrix Multiplication}

We use $M(m,n,p)$ to denote the asymptotic time complexity of multiplying an $m \times n$ matrix with an  $n \times p$ matrix. We denote $M(n,n,n)=O(n^{\omega})$, where $\omega < 2.373$~\cite{Coppersmith:1987:MMV:28395.28396,williams2012multiplying,le2014powers}. In this paper, rectangular matrix multiplications are straightforwardly reduced to square matrix multiplications. So we will only use the following fact. (Here $\min(x_1,\cdots, x_k)$ means the minimum one in $\{x_1,\cdots, x_k\}$.)

\begin{lemma} \label{L0}
$M(m,n,p) = O\left(\dfrac{mnp}{\min(m,n,p)^{3 - \omega}}\right)$.
\end{lemma}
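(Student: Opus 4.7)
The plan is to reduce the rectangular product $M(m,n,p)$ to square $s \times s$ matrix multiplication, where $s = \min(m,n,p)$, via a standard block-partitioning argument. First I would use the transpose identity $(AB)^T = B^T A^T$, which gives $M(m,n,p) = M(p,n,m)$, to restrict attention to two essentially different cases: when $s = n$ (the inner dimension is smallest) and when $s = m$ (an outer dimension is smallest); the case $s = p$ is handled symmetrically.

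Then for each case I would partition $A$ and $B$ into $s \times s$ blocks, padding the non-smallest dimensions up to a multiple of $s$ with zero entries (this inflates each side by at most a factor of two, a constant cost). When $s = n$, I would cut $A$ into $\lceil m/s \rceil$ row blocks of size $s \times s$ and $B$ into $\lceil p/s \rceil$ column blocks of size $s \times s$, so that $C = AB$ decomposes into roughly $(m/s)(p/s)$ disjoint $s \times s$ blocks, each obtainable from a single $s \times s$ multiplication in $O(s^\omega)$ time. The total cost is $O((mp/s^2) \cdot s^\omega) = O(mp \cdot s^{\omega-2})$, and substituting $s = n$ gives exactly $O(mnp/s^{3-\omega})$. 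When $s = m$, I would cut $A$ into $\lceil n/s \rceil$ column blocks of size $s \times s$ and $B$ into an $\lceil n/s \rceil \times \lceil p/s \rceil$ grid of $s \times s$ blocks; each of the $\lceil p/s \rceil$ column blocks of $C$ is then a sum of $\lceil n/s \rceil$ square products, giving $O((np/s^2) \cdot s^\omega) = O(np \cdot s^{\omega-2}) = O(mnp/s^{3-\omega})$ using $m = s$.

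There is no substantive obstacle; the only thing worth checking carefully is that the algebra $s \cdot s^{\omega-2} = s^{\omega-1}$ correctly absorbs the smallest dimension from the numerator $mnp$ in each case, and that padding changes the cost only by a constant. Once those are verified, the bound $M(m,n,p) = O(mnp / s^{3-\omega})$ holds uniformly across all orderings of the three parameters.
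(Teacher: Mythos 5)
Your proof is correct and follows essentially the same route as the paper: decompose the rectangular matrices into $\min(m,n,p)\times\min(m,n,p)$ square blocks and apply fast square matrix multiplication, which yields $O(mnp/\min(m,n,p)^{3-\omega})$ in every case. You merely spell out the block-counting, padding, and case analysis that the paper's one-line proof leaves implicit, and your arithmetic in each case checks out.
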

\begin{proof}
The rectangular matrices are decomposed into $\min(m,n,p) \times \min(m,n,p)$ square matrices, then standard fast square matrix multiplication is applied.
\end{proof}
\subsection{Na\"ive Algorithm}

Let us first take a look at the na\"ive algorithm for APNP, a simple Dijkstra-type search~\cite{dijkstra1959note}. Initially set $d(i,j)=w(i,j)$ for all edges $(i,j)\in E$, and $d(i,j)=+\infty$ otherwise. Each time we visit the minimum unvisited $d(i,j)$ and enumerate every out-going edge $(j,k)$ of vertex $j$. If $d(i,j)$ and $(j,k)$ form a nondecreasing path, then we update $d(i,k) \gets \min(d(i,k), w(j,k))$. (See Algorithm \ref{algonaive}.)  We refer to this update step as \textit{relaxing} edge $(j,k)$ w.r.t. $d(i,j)$. By the greedy nature of Dijkstra search, when we visit $d(i,j)$, $d(i,j) = OPT(i,j)$. Namely, it is optimal. 

\begin{algorithm}
\caption{Relaxing edges na\"ively}\label{algonaive}
\begin{algorithmic}[1]
\While{there exists unvisited $d(i,j)$}
\State{visit the minimum unvisited $d(i,j)$}
\For{every $(j,k)$ such that $w(j,k) > d(i,j)$}
\State{perform update $d(i,k)\gets \min(d(i,k), w(j,k))$ }
\EndFor
\EndWhile
\end{algorithmic}
\end{algorithm}

For clarity, our usages of symbols $i$, $j$, $k$ stick to the following convention: $d(i,k)$ refers to the path being updated by the concatenation of path $d(i,j)$ and edge $(j,k)$.

The na\"ive algorithm takes $O(n^3)$ time when edge weights are integers from 0 to $2^b-1$, since a bucket heap of size $O(n^2)$ is enough to maintain unvisited $d(i,j)$. 
%Our algorithm also adopted the Dijkstra Search framework. The basic step of our algorithm is also to append edges $(j,k)$ to paths $d(i,j)$ and update $d(i,k)$. 
\subsection{Classifying edges according to degrees}\label{subsecclassifying}

In order to avoid relaxing all edges when visiting $d(i,j)$, our algorithm will classify the edges based on the degrees of their two endpoints, and partition the edges whose both endpoints have high degrees into two sets by the order of their weights, then recursively deal with these two sets. We relax different types of edges by different approaches. First, we define ``high-degree'' and ``low-degree'' vertices in a subgraph. 
%In order to avoid relaxing all edges when visiting $d(i,j)$, our algorithm will be based on a divide-and-conquer approach and utilize the fast matrix multiplication for subgraphs of high-degree vertices. First, we will define ``high-degree'' and ``low-degree'' vertices in a subgraph $G_{[x]}$ of $G$, which contains all edges whose weight has prefix $[x]$. 

%But we do not enumerate all out-going edges of $j$ while visiting $d(i,j)$. Instead, we handle a set of updates together with fast matrix multiplication which speedup the brute force enumeration. 

\paragraph*{High degree and low degree}

In this paper, sometimes we use a subset of edges $E'\subseteq E$ to denote the subgraph $G'=(V',E')$  where $V'$ is the set of vertices associated with $E'$, namely, vertices in $E'$ refers to vertices in $V'$. In a subgraph $G'=(V',E')$ of $G$, a vertex has high outdegree if its outdegree is larger than $n^{1 - t}$, and otherwise, it has low outdegree. Here $t$ is a parameter to be determined later (we will choose $t = \frac{3 - \omega}{2}$). Similarly, a vertex has high indegree if its indegree is larger than $n^{1 - t}$, and otherwise, it has low indegree. In our algorithm, edges $(j,k)$ in a subgraph are divided into three types based on the outdegree of $j$ and indegree of $k$:
\begin{itemize}
	\item Low edge: if $j$ has low outdegree
	\item High-high edge: if $j$ has high outdegree and $k$ has high indegree 
	\item High-low edge: if $j$ has high outdegree and $k$ has low indegree
\end{itemize}

\paragraph*{Divide the edges}
In this binary partition procedure, starting from the entire edge set $E'_{[ \ ]}=E$, each time we divide the set of high-high edges $H_{[x]}$ in $E'_{[x]}$ into two parts: $H'_{[x][0]}$ and $H'_{[x][1]}$, based on its next bit after prefix $[x]$, then recursively partition the edge sets $H'_{[x][0]}$ and $H'_{[x][1]}$. Here we use $L_{[x]}$ and $\Gamma_{[x]}$ to denote the low edges and high-low edges with prefix $[x]$ obtained in the algorithm. 

\begin{algorithm}
\caption{Divide the edges}\label{algodivide}
\begin{algorithmic}[1]
\Procedure{Divide}{$E'_{[x]}$}
    \State Consider the indegrees and outdegrees of vertices in the graph $G'_{[x]}=(V,E'_{[x]})$;
    \State Let $L_{[x]}$ be the set of edges from low outdegree vertices;
    \State Let $H_{[x]}$ be the set of edges from high outdegree vertices to high indegree vertices;
    \State Let $\Gamma_{[x]}$ be the set of edges from high outdegree vertices to low indegree vertices;
    \State Let $H'_{[x][0]}=\{(j,k)\in H_{[x]}\text{ s.t. $[w(j,k)]$ has prefix $[x][0]$}\}$ and $H'_{[x][1]}=\{(j,k)\in H_{[x]}\text{ s.t. $[w(j,k)]$ has prefix $[x][1]$}\}$.
    \State \Call{Divide}{$H'_{[x][0]}$}
    \State \Call{Divide}{$H'_{[x][1]}$}
\EndProcedure
\end{algorithmic}
\end{algorithm}

At the beginning we call $\textsc{Divide}(E)$.
Since only the edges in $H_{[x]}$ go into the next recursion, every edge can only appear in one of $\{L_{[x]}\}$ or $\{\Gamma_{[x]}\}$ but can appear in many of $\{H_{[x]}\}$.

\subsection{Outline of our algorithm}

In our algorithm, we run a Dijkstra-type procedure. When visiting $d(i,j)$, which is guaranteed to be optimal, we relax all the edges $(j,k)$ in $L_{[x]}$ and $\Gamma_{[x]}$ w.r.t. $d(i,j)$ for all $[x]$ which is a prefix of $[d(i,j)]$. The outdegree of $j$ is small in $L_{[x]}$ but not in $\Gamma_{[x]}$, so to relax $L_{[x]}$, we relax all edges from $j$ as the na\"ive algorithm. But to relax $\Gamma_{[x]}$, a preprocessing procedure of the edges in $\Gamma_{[x]}$ is needed in order to save time. For edges in $H_{[x]}$, instead of immediately relaxing them, we wait until all optimal paths with prefix $[x][0]$ are visited, then perform a $(\min, \le)$-matrix product of these paths with the adjacency matrix of $H'_{[x][1]}$ to relax all edges in it, so all high-high edges are relaxed w.r.t. the paths whose the longest common prefix with the edges is $[x]$. 
%In our algorithm, we start from the smallest edge and run the Dijkstra-type procedure.
%When visiting $d(i,j)$, which we guarantee to be optimal, we relax all the edges in $L_{[x]}$ and $\Gamma_{[x]}$ w.r.t. $d(i,j)$ where $[x]$ is a prefix of $[d(i,j)]$.
%The outdegree of $j$ is small in $L_{[x]}$ but not in $\Gamma_{[x]}$, so we need extra effort to preprocess the edges in $\Gamma_{[x]}$ to save running time.
%For edges in $H_{[x]}$, instead of immediately relaxing them, we wait until we have found all optimal paths with prefix $[x][0]$, then a $(\min, \le)$ product with the adjacency matrix of $H'_{[x][1]}$ can relax all edges in it.
Details of these methods will be given in Section~\ref{BT}.

\section{Basic techniques} \label{BT}

In this section, we explain the method we use for relaxing edges. Suppose we are trying to relax edges $(j,k)$ w.r.t. $d(i,j)$, it will be based on whether $(j,k)$ is in $L_{[x]}$, $H_{[x]}$, or $\Gamma_{[x]}$.

As explained before, $L_{[x]}$ is easiest. Since $j$ has low outdegree, like the na\"ive algorithm, we simply relax all of its outgoing edges, which takes $O(n^{1-t})$ time. $H_{[x]}$ and $\Gamma_{[x]}$ are handled by different methods, though they both use the ``row/column balancing'' technique of matrix proposed in \cite{duan2009fast}. Instead of considering it as splitting rows/columns on matrix, we describe the balancing technique as splitting vertices so that every vertex has low outdegree/indegree. In the following subsections we describe the balancing technique and how to handle $H_{[x]}$ and $\Gamma_{[x]}$.

\subsection{Balancing}\label{section:balancing}

A graph $G = (V,E)$ contains at most $\frac{|E|}{n^{1 - t}}$ vertices with high indegrees or outdegrees.
If there are exactly $\Theta(\frac{|E|}{n^{1 - t}})$ number of vertices with high degrees, we would expect an average degree of $O(n^{1 - t})$.
However, there may be less vertices with high degrees, and the degrees of some vertices may be far greater than $n^{1-t}$.
To balance the indegree (outdegree) of each vertex, we split every vertex into several vertices each of indegree (outdegree) $n^{1- t}$ and one vertex of indegree (outdegree) $\leq n^{1-t}$ . The number of new vertices with indegree (outdegree) exactly $n^{1 - t}$ is bounded by $O\left(\frac{|E|}{n^{1 - t}}\right)$, and every original high indegree (outdegree) vertex corresponds to at most one new vertex with indegree (outdegree) $<n^{1 - t}$, thus we have at most $O\left(\frac{|E|}{n^{1 - t}}\right)$ many new vertices each with indegree (outdegree) $\leq n^{1 - t}$.

%Since there are at most $O\left(\frac{|E|}{n^{1 - t}}\right)$ many origianl high indegree (outdegree) vertices each corresponding to at most one new vertex with indegree (outdegree) $\leq n^{1 - t}$, and there are at most $O\left(\frac{|E|}{n^{1 - t}}\right)$ many new vertices with indegree (outdegree) $n^{1 - t}$, we have at most $O\left(\frac{|E|}{n^{1 - t}}\right)$ many new vertices each with indegree (outdegree) $\leq n^{1 - t}$ 
%not smooth here

In our algorithm, for edge set $\{(j,k)\}$, we use this technique to either balance the outdegrees of vertices $j$ or the indegrees of vertices $k$. Here we demonstrate this technique for balancing indegrees of $k$ as an example in Table~\ref{balancing}. Denote the set of edges after balancing to be $\bar{E}$, then the graph corresponding to $\bar{E}$ is actually a bipartite graph with edges between vertices $j$ and $k'_r$. The procedure for balancing outdegrees of $j$ is symmetric. 

\begin{table}
\centering
\begin{tabular}{|p{12.8cm}|}
\hline
\textbf{Balancing the indegrees of vertices}\\
\hline 
(S1) For every $k$, sort all the in-coming edges of $k$ in increasing order and obtain the sorted list $L_k$.\\
(S2) Split $k$ into vertices $k'_1, k'_2, \cdots k'_p$, and divide $L_k$ into segments each of size $n^{1 - t}$ (while the last segment is possibly incomplete). The edges in the $r$-th segment is assigned to vertex $k'_r$. Namely, edge $(j,k)$ in the $r$-th segment now becomes edge $(j, k'_r)$. \\
(S3) Let $[L(k'_r), R(k'_r)]$ denote the weight range of in-coming edges of $k'_r$. Namely $L(k'_r)$ is the minimum weight of in-coming edges of $k'_r$, and $R(k'_r)$ is the maximum. Obviously, $L(k'_1) \leq R(k'_1) < L(k'_2) \leq R(k'_2) < \cdots < L(k'_p) \leq R(k'_p)$.\\
\hline
\end{tabular}
\caption{Balancing the indegrees of vertices $k$}
\label{balancing}
\end{table}

\subsection{High-high edges} \label{high-high}

The technique for high-high edges solves the following problem: Given a set $P$ of optimal paths of prefix $[x][0]$ and a set $H'_{[x][1]}$ of high-high edges, the problem asks to relax all edges in $H'_{[x][1]}$ w.r.t. paths in $P$, namely, extend paths in $P$ by a single edge in $H'_{[x][1]}$. 

This problem is equivalent with a length-two nondecreasing path problem. 
%To see this, we define a new graph with $3n$ vertices $X_i, Y_j, Z_k$. Each path $OPT(i,j) \in P$ is mapped to an edge $(X_i, Y_j)$, and each edge $(j,k) \in H'_{[x][1]}$ is mapped to an edge $(Y_j, Z_k)$. A new path from $i$ to $k$ in orginal graph corresponds to a nondecreasing path from $X_i$ to $Z_k$. 
As discussed in Section~\ref{intro}, this can be solved by fast $(\min, \le)$-product implied in~\cite{duan2009fast}. But the rectangular version is not covered by~\cite{duan2009fast}, so we fully describe the algorithm and its analysis (in graphs).

We have two extra guarantees when we use this procedure in our main algorithm:
\begin{itemize}
	\item $P$ is the set of optimal paths $OPT(i,j)$ such that $[OPT(i,j)]$ has prefix $[x][0]$, so any path $d(i,j)$ in $P$ can form a nondecreasing path with any edge $(j,k)$ in $H'_{[x][1]}$.
	\item All $d(i,j)$ satisfying $[d(i,j)] < [x][1]$ are already visited by Dijkstra search. So we can tell whether $[OPT(i,j)] < [x][1]$ or not.
\end{itemize}

Suppose there are $n_{[x][1]}$ many $[OPT(i,j)]$ which have $[x][1]$ as a prefix. In our algorithm the time complexity will depend on $n_{[x][1]}$. 

The first step is to apply the balancing technique in Section~\ref{section:balancing} to indegrees of vertices $k$ in $H'_{[x][1]}$, and let the edge set after balancing be $\bar{H'}_{[x][1]}$. %(Here ,for $j$ and $k$, we only consider vertices in $H'_{[x][1]}$.) 
Here each high indegree vertex $k$ is split into vertices $\{k'_r\}$. We define the following two matrices: ($j$ and $k'_r$ only include vertices having out-going edges and in-coming edges in $\bar{H'}_{[x][1]}$, respectively.)
\begin{align*}
&A_{i,j} = \begin{cases}
1& \text{$d(i,j) \in P$}\\
0& \text{otherwise}
\end{cases} &B_{j,k'_r} = \begin{cases}
1& \text{$(j,k'_r) \in \bar{H'}_{[x][1]}$}\\
0& \text{otherwise}
\end{cases}
\end{align*}

Then we multiply them with rectangular matrix multiplication.
%to append $\bar{E}$ to $P$. 
Let $C = AB$, then, 
\begin{align*}
&C_{i,k'_r} \begin{cases}
>0 & \text{if there is a nondecreasing path from $i$ to $k'_r$} \\
=0 & \text{otherwise}
\end{cases}
\end{align*}

For every pair $(i,k)$ such that $[OPT(i,k)] \geq [x][1]$ ($d(i,j)\not\in P$) and $k$ is an in-coming vertex in $H'_[x][1]$, we find the minimum $r$ with $C_{i,k'_r} > 0$ and relax all $n^{1 - t}$ incoming edges $(j,k'_r)$ of $k'_r$ w.r.t. $d(i,j)$ if $d(i,j)\in P$. We can skip other $r' > r$ because $OPT(i,k) \leq R(k'_r) < L(k'_{r'})$. Namely, we only have to relax $O(n^{1 - t})$ many edges for each $(i,k)$, which is the benefit of balancing. We attribute this $O(n^{1 - t})$ cost of relaxations to the $OPT(i,k)$ with prefix $[x][1]$ found in these relaxations, which must exist if $C_{i,k'_r}>0$. %Here $OPT(i,k)$ has prefix $[x][1]$ because $[x][1] \leq [OPT(i,k)] \leq [R(k'_r)]$ and $[R(k'_r)]$ has prefix $[x][1]$.

The procedure above consists of three parts: finding minimum $r$ for each $(i,k)$, relaxation and matrix multiplication.
The time cost for the first part is at most enumerating all nonzero elements of $C$, so it is dominated by the matrix multiplication part.

\begin{lemma}\label{high-high:enumeration}
The relaxation part takes $O\left(n_{[x][1]} n^{1 - t}\right)$ time in total.
\end{lemma}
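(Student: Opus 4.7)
The plan is a charging argument: each $O(n^{1-t})$ unit of relaxation work performed for a pair $(i,k)$ will be charged to the distinct optimal path $OPT(i,k)$, which I will show must have weight prefix exactly $[x][1]$; since there are at most $n_{[x][1]}$ such targets, the total cost will be $O(n_{[x][1]} n^{1-t})$.

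First I would bound the per-pair cost. For each processed $(i,k)$ the algorithm identifies the minimum $r$ with $C_{i,k'_r}>0$ and relaxes only the in-coming edges of the single split vertex $k'_r$; by the balancing construction of Section~\ref{section:balancing} such $k'_r$ has at most $n^{1-t}$ in-coming edges, giving $O(n^{1-t})$ work per pair. Larger indices $r' > r$ are safely skipped because $L(k'_{r'}) > R(k'_r) \geq OPT(i,k)$, as observed in the text preceding the lemma. Next I would establish the central charging invariant: $[OPT(i,k)]$ has prefix exactly $[x][1]$ whenever we do this work. By the processing hypothesis $[OPT(i,k)] \geq [x][1]$. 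On the other side, $C_{i,k'_r}>0$ witnesses a concatenation of some $d(i,j)\in P$ (whose weight has prefix $[x][0]$, so the chaining is legitimately nondecreasing) with an edge $(j,k'_r)\in \bar{H'}_{[x][1]}$, producing a nondecreasing $i$-to-$k$ path whose last edge weight has prefix $[x][1]$. Hence $OPT(i,k)$ is at most this witness value, and sandwiching it between $[x][1]$ and a string with prefix $[x][1]$ forces $[OPT(i,k)]$ itself to carry that prefix.

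With the invariant in hand, distinct processed pairs $(i,k)$ correspond to distinct targets $OPT(i,k)$, each with prefix $[x][1]$, so the number of processed pairs is at most $n_{[x][1]}$ and the total relaxation cost is $O(n_{[x][1]} n^{1-t})$ as claimed. The main obstacle is the prefix-locking step: one must carefully combine the two-sided comparison $[x][1] \leq [OPT(i,k)] \leq (\textrm{witness with prefix }[x][1])$ into a statement about their shared prefix, and justify that paths in $P$ chain legally with edges in $\bar{H'}_{[x][1]}$ (which follows from $[x][0]$ preceding $[x][1]$). Once this locking is pinned down the rest of the argument is a direct counting of targets.
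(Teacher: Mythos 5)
Your proof is correct and takes essentially the same route as the paper: it charges the $O(n^{1-t})$ per-pair relaxation cost (bounded by balancing and by skipping indices $r'>r$) to $OPT(i,k)$, and uses the two-sided comparison $[x][1][0\cdots 0]\le [OPT(i,k)]\le (\text{witness weight with prefix }[x][1])$ to conclude that $[OPT(i,k)]$ has prefix $[x][1]$, so at most $n_{[x][1]}$ pairs are ever charged. Your version merely spells out the prefix-locking step that the paper states in the discussion preceding the lemma rather than inside its proof.
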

\begin{proof}
The cost of relaxation is attributed to each $OPT(i,k)$ with prefix $[x][1]$. Since there are $n_{[x][1]}$ many $OPT(i,k)$ with prefix $[x][1]$, and each corresponds to $O(n^{1-t})$ relaxations, it costs $O(n_{[x][1]} n^{1 - t})$ time in total.
\end{proof}

\begin{lemma}\label{high-high:multiplication}
In the matrix multiplication part, $A$ is an $n\times O\left(\min\left(n, \frac{2^{b - |[x]|}}{n^{1 - t}}\right)\right)$ matrix, and $B$ is a $O\left(\min\left(n, \frac{2^{b - |[x]|}}{n^{1 - t}}\right)\right)\times O\left(\frac{2^{b - |[x]|}}{n^{1 - t}}\right)$ matrix, so the time complexity for matrix multiplication is $M\left(n,\min\left(n,\frac{2^{b - |[x]|}}{n^{1 - t}}\right),\frac{2^{b - |[x]|}}{n^{1 - t}}\right)$.
\end{lemma}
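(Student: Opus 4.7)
The plan is to bound each side of the two matrices separately in terms of $n$ and $\frac{2^{b-|[x]|}}{n^{1-t}}$, and then plug those into the definition of $M(\cdot,\cdot,\cdot)$.

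First I would record a simple counting fact: since edge weights are distinct binary strings of length $b$, the number of edges whose weight has prefix $[x]$ is at most $2^{b-|[x]|}$, and in particular $|E'_{[x]}| \le |E_{[x]}| \le 2^{b-|[x]|}$ and $|H'_{[x][1]}| \le 2^{b-|[x]|-1} = O\!\left(2^{b-|[x]|}\right)$. This is the only quantitative input I need to translate ``small edge set'' into ``small matrix''.

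Next I would bound the columns of $A$ (equivalently, the rows of $B$), which are indexed by those vertices $j$ that have out-going edges in $\bar{H'}_{[x][1]}$. The bound of $n$ is trivial. For the other bound, the key observation is that every such $j$ must be a high-outdegree vertex in $G'_{[x]}$, because $H_{[x]}\supseteq H'_{[x][1]}$ was defined as the set of edges from high-outdegree vertices to high-indegree vertices of $G'_{[x]}$. Since each such $j$ contributes at least $n^{1-t}$ distinct outgoing edges to $E'_{[x]}$, a double counting argument gives at most $|E'_{[x]}|/n^{1-t}=O\!\left(2^{b-|[x]|}/n^{1-t}\right)$ such vertices. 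Taking the minimum yields the claimed dimension $O\!\left(\min\!\left(n,\frac{2^{b-|[x]|}}{n^{1-t}}\right)\right)$.

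For the columns of $B$, indexed by the split vertices $k'_r$ produced by applying the indegree-balancing of Section~\ref{section:balancing} to $H'_{[x][1]}$, I invoke the count from that section directly: the number of new vertices is $O(|H'_{[x][1]}|/n^{1-t})$ (full segments of size $n^{1-t}$) plus at most one leftover per original high-indegree vertex, and the number of original high-indegree $k$'s is again at most $|E'_{[x]}|/n^{1-t}$. Both contributions are $O\!\left(2^{b-|[x]|}/n^{1-t}\right)$. Combining, $B$ has $O\!\left(2^{b-|[x]|}/n^{1-t}\right)$ columns, and the rows of $A$ are clearly $n$ (all vertices $i$). The stated matrix multiplication bound $M\!\left(n,\min\!\left(n,\frac{2^{b-|[x]|}}{n^{1-t}}\right),\frac{2^{b-|[x]|}}{n^{1-t}}\right)$ then follows from substituting the three dimensions into the definition. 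There is no real obstacle in this argument; the only subtlety worth stating carefully is why $j$ is forced to have high outdegree \emph{in $G'_{[x]}$} (not in $H'_{[x][1]}$), which is what allows the $2^{b-|[x]|}/n^{1-t}$ bound on the middle dimension rather than the weaker $2^{b-|[x]|}$.
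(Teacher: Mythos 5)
Your proof is correct and follows essentially the same route as the paper's: bound the number of source vertices $j$ via the high-outdegree threshold, bound the number of split vertices $k'_r$ via the balancing count, and use the prefix condition ($|H'_{[x][1]}|\le 2^{b-|[x]|-1}$, $|E'_{[x]}|\le 2^{b-|[x]|}$) to obtain the stated dimensions. If anything, your bookkeeping through $E'_{[x]}$ is slightly more careful than the paper's one-line counts, which bound the number of $j$'s and of original $k$'s by $|H'_{[x][1]}|/n^{1-t}$ even though the high-degree guarantee holds only in $G'_{[x]}$; since $|E'_{[x]}| = O\left(2^{b-|[x]|}\right)$ as well, both versions arrive at the same bound.
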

\begin{proof}
There are at most $\min\left(n,\frac{|H'_{[x][1]}|}{n^{1 - t}}\right)$ many $j$ because vertices $j$ have high outdegrees. After balancing, there are at most $O\left(\frac{|H'_{[x][1]}|}{n^{1 - t}}\right)$ many $k'_r$ by discussion in Section \ref{section:balancing}. Since each edge in $H'_{[x][1]}$ has prefix $[x][1]$, $|H'_{[x][1]}| \leq 2^{b - |[x]| - 1} = O\left(2^{b - |[x]|}\right)$. Plug in the size of $H'_{[x][1]}$ gives the desired bound. 
\end{proof}

\subsection{High-low edges} \label{high-low}

Now we consider the relaxation of the high-low edges in $\Gamma_{[x]}$ when visiting paths with the same prefix $[x]$. To preprocess high-low edges, we run an initialization step when all optimal paths less than $[x]$ have been visited. As before, we denote the number of $[OPT(i,j)]$ with prefix $[x]$ by $n_{[x]}$.

Since the outdegrees of vertices $j$ are high, we cannot relax edges one by one. But we still want to utilize the property that the indegrees of $k$ are low. As in the last subsection, we denote the set of optimal paths we have found to be $P$, namely when we visit $d(i,j)$, we add $d(i,j)$ to $P$. By the nature of Dijkstra search, such $d(i,j)$ is always visited in increasing order. At the initialization step, $P$ is the set of all optimal paths less than $[x]$. During the procedure, optimal paths with prefix $[x]$ are added to $P$.

We also maintain a dynamic set $Q$ which is initially empty. When we relax an edge $(j,k)$ w.r.t. $d(i,j)$, we put $d(i,k)$ into $Q$ only if $[d(i,k)]\geq [x]$, that is, $d(i,k)$ was not visited at initialization. So $Q$ contains new nondecreasing path but not guaranteed to be optimal.

\paragraph*{Initialization by Matrix Multiplication}

%To do so, we need to preprocess edges in $\Gamma_{[x]}$.
The first step is to apply the balancing technique to the outdegrees of vertices $j$ in $\Gamma_{[x]}$ such that every vertex $j$ in $\Gamma_{[x]}$ is split into a sequence of vertices $\{j'_r\}$. Suppose the edge set after balancing is $\bar{\Gamma}_{[x]}$. Then, we define the following two matrices: ($j'_r$ and $k$ only include vertices having out-going edges and in-coming edges in $\bar{\Gamma}_{[x]}$, respectively.)
%a little notation issue: vertex j may have low degree in Gamma_{[x]} But it should be denoteed as j'_1 ?
\begin{align*}
& A_{i,k} = \begin{cases}
1& \text{$d(i,k) \not\in P \cup Q$ }\\
0& \text{otherwise}
\end{cases} 
&B_{k,j'_r} = \begin{cases}
1& \text{$(j'_r, k) \in \bar{\Gamma}_{[x]}$ }\\
0& \text{otherwise}
\end{cases}
\end{align*}

We compute $C = AB$. Basically speaking, the matrix $C$ can indicate whether we need to relax edges $(j'_r,k)$ from $j'_r$ when visiting $d(i,j)$. Note that when we run initialization, $Q$ is empty, so $d(i,k)\not\in Q$ is trivially true. But we will add paths to $Q$ and dynamically update the matrix multiplication later.

We make the following observation about $C_{i,j'_r}$. 

\begin{observation}\label{Cij}
If $d(i,j) < L(j'_r)$ and $C_{i,j'_r} > 0$, there is at least one edge $(j'_r, k)$ such that $d(i,k) \not\in P \cup Q$ and $[x]$ is a prefix of $[OPT(i,k)]$. \\
Conversely, if $C_{i,j'_r} = 0$, there is no such an edge $(j'_r, k)$.
\end{observation}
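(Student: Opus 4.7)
The plan is to unpack the definition of the matrix product. Since $C_{i,j'_r} = \sum_k A_{i,k} B_{k,j'_r}$ is a sum of nonnegative entries, the hypothesis $C_{i,j'_r} > 0$ asserts the existence of at least one vertex $k$ that simultaneously satisfies $d(i,k) \notin P \cup Q$ and $(j'_r,k) \in \bar{\Gamma}_{[x]}$. The converse direction of the observation is then immediate: if $C_{i,j'_r} = 0$, every $k$ with $(j'_r,k) \in \bar{\Gamma}_{[x]}$ must satisfy $d(i,k) \in P \cup Q$, so no edge of the required form exists. The real work is therefore entirely in the forward direction.

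For the forward direction, I would fix such a witnessing $k$ and show that $[OPT(i,k)]$ has prefix $[x]$ by sandwiching $OPT(i,k)$ between two values that share this prefix. For the upper bound, the balanced edge $(j'_r,k)$ originates from an original edge $(j,k)\in \Gamma_{[x]}$ whose weight $w(j,k)$ lies in $[L(j'_r),R(j'_r)]$. Since $d(i,j) < L(j'_r) \le w(j,k)$, concatenating the optimal $i$-to-$j$ path of weight $d(i,j)$ with the edge $(j,k)$ yields a legal nondecreasing $i$-to-$k$ path of weight $w(j,k)$, and because $(j,k)\in \Gamma_{[x]}$ this weight has prefix $[x]$. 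Hence $OPT(i,k) \le w(j,k) \le [x][1\cdots 1]$.

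The step I expect to be the main obstacle is the matching lower bound $OPT(i,k) \ge [x][0\cdots 0]$, which is where the ``$d(i,k)\notin P$'' clause of $A_{i,k}=1$ must be exploited. My plan here is to invoke the Dijkstra invariant underlying the overall algorithm: by the time the initialization step for prefix $[x]$ is run, every pair $(i',k')$ with $OPT(i',k') < [x][0\cdots 0]$ has already been visited, and so its current estimate equals $OPT(i',k')$ and lies in $P$. Taking the contrapositive for our particular pair $(i,k)$, the assumption $d(i,k)\notin P$ forces $OPT(i,k) \ge [x][0\cdots 0]$, and together with the upper bound above this pins $[OPT(i,k)]$ to have prefix $[x]$. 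A subtlety worth noting is that $A$ will be updated dynamically as new paths enter $Q$ during the phase, but only the invariant on $P$ is needed for this lower bound, and that invariant is preserved throughout the phase, so the same argument applies unchanged to later applications of the observation.
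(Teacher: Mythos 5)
Your proposal is correct and follows essentially the same route as the paper's proof: unpack the matrix product to get a witness $k$ with $A_{i,k}=1$ and $B_{k,j'_r}=1$, use $d(i,k)\notin P$ together with the Dijkstra invariant to get $OPT(i,k)\ge [x][0\cdots 0]$, and use $d(i,j)<L(j'_r)\le w(j,k)$ with $(j,k)\in\Gamma_{[x]}$ to get $OPT(i,k)\le [x][1\cdots 1]$, with the converse immediate from nonnegativity. Your write-up simply makes explicit the two bounds that the paper states tersely, so no further changes are needed.
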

\begin{proof}
Since $C_{i, j'_r} > 0$, at least one vertex $k$ satisfy the following:
\begin{itemize}
	\item $A_{i,k} > 0$ : $d(i,k) \not\in P$, so $[OPT(i,k)] \geq [x][0\cdots0]$. Also $d(i,k) \not\in Q$. 
	\item $B_{k, j'_r} > 0$ : $(j'_r, k) \in \bar{\Gamma}_{[x]}$. 
\end{itemize}
Since $d(i,j) < L(j'_r)$,  $d(i,j)$ and the original edge of $(j'_r,k)$ form a non-decreasing path. So $[OPT(i,k)] \leq [x][1\cdots1]$. 

Conversely, if $C_{i,j'_r} = 0$, for every $k$, at least one of these happens:
\begin{itemize}
	\item $A_{i,k} = 0$ : $d(i,k) \in P \cup Q$
	\item $B_{k,j'_r} = 0$ : edge $(j'_r, k)$ does not exist in $\bar{\Gamma}_{[x]}$. 
\end{itemize}
\end{proof}

\paragraph*{Update matrix multiplication}

When a new path $d(i,k)$ is added to $P$ or $Q$, the matrix $A$ and product $C$ need to be updated. Adding a new path to $P$ or $Q$ only changes one entry of $A_{i,k}$, so we utilize the low indegree of $k$. There are at most $O(n^{1 - t})$ many $j'_r$ such that $B_{k, j'_r} \not= 0$, since $\Gamma_{[x]}$ contains high-low edges only. So the update of $C$ when changing one element $A_{i,k}$ takes only $O(n^{1-t})$ time by enumerating all nonzero $B_{k,j'_r}$. This cost can be attributed to each $[OPT(i,k)]$ with prefix $[x]$, since every such path can only be added to $P\cup Q$ once. 
%When visiting path $d(i,j)$, this cost can be attribute to $d(i,j)$. 
However, adding $d(i,k)$ to $Q$ does not mean we have found the optimal path $OPT(i,k)$, as it can still be updated. How to deal with this will be discussed later. 

\paragraph*{Relaxation when visiting $d(i,j)$}

When we visit $d(i,j)$, for each split vertex $j'_r$ of $j$, there are three cases:

\begin{enumerate}
\item $d(i,j) > R(j'_r)$ : $d(i,j)$ cannot form a non-decreasing path with any out-going edge of $j'_r$, so we skip $j'_r$. 
\item $d(i,j) \in [L(j'_r), R(j'_r)]$ : We relax all out-going edges of $j'_r$ larger than $d(i,j)$. 
\item $d(i,j) < L(j'_r)$ : Only when $C_{i,j'_r} > 0$, we relax all out-going edges of $j'_r$ one by one. 
\end{enumerate}

By Observation \ref{Cij}, when $C_{i,j'_r} = 0$, for each edge $(j'_r,k)$, either $d(i,k) \in P$ or $d(i,k) \in Q$. If $d(i,k) \in P$, it needs no more update. If $d(i,k) \in Q$, roughly speaking, since $k$ has indegree less than $n^{1 - t}$, the updates for $d(i,k)$ can be done ``in advance'' when it is added to $Q$. The details will be clear later. This is why we can skip $j'_r$ when $C_{i,j'_r} = 0$.

Since the degree of $j'_r$ is bounded by $n^{1 - t}$, the relaxation takes $O(n^{1 - t})$ time for each $j'_r$. For the second case, it only happens once for each $d(i,j)$ because $[L(j'_r), R(j'_r)]$ are disjoint for different $j'_r$, so the $O(n^{1 - t})$ cost is attributed to $d(i,j)$. For the third case, by Observation~\ref{Cij}, there is at least one edge $(j'_r,k)$ such that $OPT(i,k) \not\in P\cup Q$ with prefix $[x]$, so the $O(n^{1 - t})$ time is attributed to $d(i,k)$. (If there is more than one such $k$, choose an arbitrary one.) Then $d(i,k)$ is added to $Q$, and the cost of updating $A$ and $C$ is also bounded by $n^{1 - t}$, dominated by the cost of relaxation.

Because the first path $d(i,k)$ we found for each $(i,k)$ is not necessarily the optimal one, we discuss how to handle all future updates of $d(i,k)$ ``in advance'' after adding it to $Q$. We enumerate every in-coming edge $(j'',k)\in \Gamma_{[x]}$ of $k$. If $d(i,j'')$ is not in $P$, we add $(i,k)$ to a \textit{waiting list} for $(i,j'')$, denoted by $W(i,j'')$. When $d(i,j)$ is visited in the future, we can go through its waiting list $W(i,j)$ and update $d(i,k)$ for all pair $(i,k)$ in the list. There are only $n^{1 - t}$ in-coming edges for $k$, so the waiting list construction cost is also $O(n^{1 - t})$ for every $d(i,k)$.

%The second step is to update the matrix multiplication and let $A_{i,k} \gets 0$, so $OPT(i,k)$ will not correspond to other enumerations in the future. Since $k$ is a vertex with low indegree, only $n^{1 - t}$ many $B_{k,j'_r} = 1$. We enumerate each of those $j'_r$, and decrease $C_{i,j'_r}$ by one. 

In conclusion, we follow the procedure in Algorithm \ref{algoupd} when visiting $d(i,j)$ with prefix $[x]$.

\begin{algorithm}
\caption{High-low relaxation when visiting $d(i,j)$}\label{algoupd}
\begin{algorithmic}[1]
\State{Add $d(i,j)$ to $P$ and update $A$ and $C=AB$.}
\For {$(i,k)$ in the \textit{waiting list} $W(i,j)$}
    \State{Relax $(j,k)$ w.r.t. $d(i,j)$ if $w(j,k)>d(i,j)$} \label{WL-relaxation}
\EndFor
\For{every $j'_r$ satisfying $d(i,j) \in [L(j'_r), R(k'_r)]$ or ($d(i,j) < L(j'_r)$ and $C_{i,j'_r} > 0$)}
        \For {every outgoing edge $(j'_r,k)$ of $j'_r$ larger than $d(i,j)$}
            \If{$d(i,k)\not\in P\cup Q$}
                \State{Relax (the original edge of) $(j'_r,k)$ w.r.t. $d(i,j)$}
        %\State{Relax each outgoing edge of $j'_r$ w.r,t. $d(i,j)$ one by one}
        %\hskip\algorithmicindent(The cost is attribute to $OPT(i,k)$ by Observation \ref{Cij})}
                \State{Add $d(i,k)$ to $Q$ and update $A$ and $C=AB$}
                \For {incoming edge $(j'',k)$ of $k$}
                    \State{Add $(i,k)$ to the \textit{waiting list} $W(i,j'')$ if $d(i,j'')\not\in P$} \label{addtowl}
                \EndFor
            \EndIf
        \EndFor
\EndFor
\end{algorithmic}
\end{algorithm}

%\begin{table}\centering\begin{tabular}{|p{12.8cm}|} \hline Handle dynamical update of P\\\hline Each time a new optimal path $d(i,j)$ is added to $P$, we do the following:\\ (S1) Go though the waiting list of $d(i,j)$, and update all vertex $k$ in it. \\ (S2) For each $r$ such that $d(i,j) \leq R(j'_r)$, we enumerate all outgoing edges $(j'_r, k)$ to update $d(i,k)$. \\ Each enumeration in (S2) when $d(i,j) < L(j'_r)$ corresponds to at least one $OPT(i,k) \in [\min(E'), \max(E')]$. After that enumeration, we do the following for each in-coming edge $(j'_r,k)$ of $k$:\\ (S21) If $d(i,j) \in P$, update $d(i,k)$. Otherwise, add $k$ to the waiting list $W(i,j)$ . \\ (S22) Decrease $C_{i,j'_r}$ by one. \\ \hline \end{tabular} \caption{Handle dynamical update of P} \label{dynamicP} \end{table}

\paragraph*{Complexity}

This procedure is divided into matrix multiplication part (initialization) and relaxation part (Algorithm \ref{algoupd}) as well.

\begin{lemma} \label{high-low:enumeration}
The relaxation part takes $O\left(n_{[x]}n^{1- t}\right)$ time.
\end{lemma}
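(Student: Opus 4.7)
The plan is to account for every operation performed by Algorithm \ref{algoupd} across all visits $d(i,j)$ with prefix $[x]$, and show that each operation can be charged either to some $d(i,j)$ with prefix $[x]$ being visited or to some $OPT(i,k)$ with prefix $[x]$ that is newly added to $P \cup Q$. Since both quantities are bounded by $n_{[x]}$ and each charge is $O(n^{1-t})$, the total cost is $O(n_{[x]} n^{1-t})$.

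First I would enumerate the components. When $d(i,j)$ is visited with prefix $[x]$, updating one entry of $A$ and propagating it to $C$ costs $O(n^{1-t})$ since column $B_{k,\cdot}$ has at most $n^{1-t}$ nonzero entries (vertices $k$ in $\Gamma_{[x]}$ have low indegree, bounding the number of split copies $j'_r$ that can be neighbors of $k$). This is a one-time cost per addition to $P$ or $Q$, and each relevant pair $(i,k)$ is added at most once, contributing $O(n_{[x]} n^{1-t})$ in total for both $P$-additions and $Q$-additions. Case~2 in the $j'_r$-loop (where $d(i,j) \in [L(j'_r), R(j'_r)]$) happens for at most one $j'_r$ per visit, since the intervals $[L(j'_r), R(j'_r)]$ are disjoint, so it contributes a single $O(n^{1-t})$ charge chargeable to the visited path $d(i,j)$; summed over the $n_{[x]}$ such visits this is $O(n_{[x]} n^{1-t})$.

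Next I would handle Case~3 (where $d(i,j) < L(j'_r)$ and $C_{i,j'_r} > 0$). By Observation~\ref{Cij}, whenever this case fires there exists at least one edge $(j'_r, k)$ with $d(i,k) \notin P \cup Q$ and $[OPT(i,k)]$ having prefix $[x]$. The enumeration of all $n^{1-t}$ outgoing edges of $j'_r$ can be charged to this new $(i,k)$, which is then added to $Q$ and thus cannot trigger another Case~3 charge for the same pair. So the total Case~3 cost sums to $O(n_{[x]} n^{1-t})$. The waiting-list construction after adding $d(i,k)$ to $Q$ enumerates the $O(n^{1-t})$ incoming edges of $k$, again chargeable once to this new path, for another $O(n_{[x]} n^{1-t})$ in total.

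Finally, I would bound the work done processing waiting lists at line~\ref{WL-relaxation}. Each element of any $W(i,j)$ was inserted exactly once at line~\ref{addtowl}, and each insertion takes $O(1)$ charged as above; processing it performs a single relaxation in $O(1)$. Hence the total waiting-list processing work is bounded by the total insertions, which is $O(n_{[x]} n^{1-t})$. Summing all these contributions gives the claimed bound. The main subtlety, which I expect to be the only delicate point, is confirming that every operation in Algorithm \ref{algoupd} is indeed covered by one of these charges and that the ``one charge per new $OPT(i,k)$ with prefix $[x]$'' accounting never double-charges a single pair $(i,k)$ across Case~3 firings — this follows because a pair can only enter $Q$ once and all subsequent visits to that pair fall into the waiting-list path, not Case~3.
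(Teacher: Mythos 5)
Your proof is correct and follows essentially the same charging argument as the paper: each $O(n^{1-t})$ cost is attributed either to a visited $d(i,j)$ with prefix $[x]$ or to a path newly added to $P\cup Q$ (whose optimal value has prefix $[x]$ by Observation~\ref{Cij}), and waiting-list work is bounded by total insertions. Your write-up is just a more itemized version of the accounting the paper spreads between the preceding discussion and the lemma's proof.
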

\begin{proof}
From discussion above, the $O(n^{1 - t})$ cost of each relaxation is either attributed to optimal $d(i,j)$ with prefix $[x]$ or $d(i,k) \in Q$. For each $d(i,k) \in Q$, $OPT(i,k)$ is of course larger than $[x][0\cdots0]$, and  then relaxed by an edge $\leq [x][1\cdots1]$, so the size of $Q$ is also bounded by $n_{[x]}$. Since each $d(i,j)$ can only be added to $P$ and $Q$ once, respectively, the total time is  $O\left(n_{[x]}n^{1- t}\right)$.

% A samll issue here : OPT(i,k) may not have prefix in Q if it is relaxed & added when d(i,j) \in [L(j'_r), R(j'_r)] but it would make thing too complicate to explain this

The total size of all waiting lists $W(i,j)$ is bounded by $O\left(n_{[x]}n^{1- t}\right)$ as well, because each time when an $d(i,k)$ is added to $Q$, we enumerate $\leq n^{1 - t}$ many incoming edges of $k$, and add $(i,k)$ to waiting list at Line~\ref{addtowl}. Every waiting list can only be relaxed once, thus, the relaxation of waiting list edges in Line~\ref{WL-relaxation} needs $O\left(n_{[x]}n^{1- t}\right)$ in total. %We add at most $O\left(n_{[x]}n^{1- t}\right)$ many $(i,k)$ in total to all the waiting lists.
\end{proof}

The following lemma is crucial. Although edges $(j,k)$ in $\Gamma_{[x]}$ are high-low edges, the number of $k$ is not directly bounded, but remind that in our binary partition of edges, only high-high edges of previous level can be in this set, thus in fact the number of $k$ \emph{cannot} be asymptotically larger than the number of $j'_r$.
\begin{lemma} \label{high-low:multiplication}
In the matrix multiplication part, $A$ is an $n\times O\left(\min\left(n, \frac{2^{b - |[x]|}}{n^{1 - t}}\right)\right)$ matrix, and $B$ is an $O\left(\min\left(n, \frac{2^{b - |[x]|}}{n^{1 - t}}\right)\right) \times O\left(\frac{2^{b - |[x]|}}{n^{1 - t}}\right)$ matrix, so the time complexity of matrix multiplication is $M\left (n, \min\left(n, \frac{2^{b - |[x]|}}{n^{1 - t}}\right), \frac{2^{b - |[x]|}}{n^{1 - t}}\right)$.
\end{lemma}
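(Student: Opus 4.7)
The plan is to establish the two nontrivial dimensions (the column count of $B$ and the shared inner dimension) separately, with the key insight being that the destination vertices $k$ appearing in $\Gamma_{[x]}$ were inherited as high-indegree vertices from the \emph{parent} level of the recursive division in Algorithm~\ref{algodivide}; this is exactly the ``crucial'' observation flagged before the statement of the lemma.

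First I would bound $|\Gamma_{[x]}|$. Because every edge of $E_{[x]}\supseteq \Gamma_{[x]}$ has a distinct $b$-bit weight whose binary representation starts with $[x]$, there are at most $2^{b-|[x]|}$ such edges. Feeding this into the balancing count from Section~\ref{section:balancing} applied to the outdegrees of the $j$-vertices of $\Gamma_{[x]}$ immediately gives $O(|\Gamma_{[x]}|/n^{1-t})=O(2^{b-|[x]|}/n^{1-t})$ many split vertices $j'_r$, which is the column dimension of $B$.

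Next I would bound the number of distinct $k$-vertices that occur as in-endpoints of $\bar{\Gamma}_{[x]}$. Trivially this is $\leq n$. For the other side of the minimum I would use the recursive structure of Algorithm~\ref{algodivide}: for $|[x]|\geq 1$ we have $E'_{[x]}=H'_{[\text{parent}]}\subseteq H_{[\text{parent}]}$, so every $k$ reached by an edge of $\Gamma_{[x]}\subseteq E'_{[x]}$ was a high-indegree vertex in the parent subgraph $G'_{[\text{parent}]}$; the number of such vertices is at most $|E'_{[\text{parent}]}|/n^{1-t}\leq 2^{b-|[\text{parent}]|}/n^{1-t}=O(2^{b-|[x]|}/n^{1-t})$. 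The base case $|[x]|=0$ falls under the $n$ side of the minimum because $2^b/n^{1-t}=\Omega(n)$. Combining yields the claimed $O(\min(n,\,2^{b-|[x]|}/n^{1-t}))$ bound on the inner dimension, and the running-time statement then follows directly from Lemma~\ref{L0} by reading off the dimensions.

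The main obstacle is precisely this $k$-count bound: the local definition of $\Gamma_{[x]}$ only guarantees that $k$ has low indegree \emph{inside} $G'_{[x]}$, which on its own is not better than the trivial bound $n$. The improvement requires unwinding one level of the recursive partition and exploiting that the same vertices were high-indegree one level up, so that the global accounting against $|E'_{[\text{parent}]}|\leq O(2^{b-|[x]|})$ becomes available.
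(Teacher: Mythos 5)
Your proposal follows essentially the same route as the paper: bound the number of split vertices $j'_r$ by $O\bigl(|\Gamma_{[x]}|/n^{1-t}\bigr)=O\bigl(2^{b-|[x]|}/n^{1-t}\bigr)$ via the balancing of Section~\ref{section:balancing}, and bound the number of in-endpoints $k$ by unwinding one level of Algorithm~\ref{algodivide} and using that these $k$ were \emph{high-indegree} vertices one level up --- exactly the paper's key step (the paper counts against $|H_{[x']}|$, you count against $|E'_{[x']}|$; yours is in fact the cleaner accounting, since high indegree is defined with respect to $E'_{[x']}$, and both give $O(2^{b-|[x]|}/n^{1-t})$). One small caveat: your base-case justification, ``$2^{b}/n^{1-t}=\Omega(n)$,'' is not true in general, since $2^{b}=\Theta(m)$ and nothing forces $m=\Omega(n^{2-t})$; the paper simply uses the trivial bound $n$ for $[x]=[\ ]$, and this looseness is harmless because the running-time analysis (Lemma~\ref{high-high:matrix-multiplication-time}) already takes the middle dimension to be $n$ whenever $2^{|[y]|}<n^{t}$, so you should drop that claim rather than rely on it.
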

\begin{proof}
Since $|\Gamma_{[x]}| \leq 2^{b - |[x]|}$ (because each edge in it has prefix $[x]$), after balancing, there are at most $O\left(\frac{2^{b - |[x]|}}{n^{1 - t}}\right)$ many $j'_r$. 

If $[x]\neq []$, suppose $[x] = [x'][0/1]$, namely $[x']$ is the prefix of $[x]$ which is one bit shorter. If $(j,k) \in \Gamma_{[x]}$, by Algorithm~\ref{algodivide}, $(j,k) \in H_{[x']}$. Since $k$ has high indegree in $H_{[x']}$, the number of such $k$ is bounded by $O\left(\min\left(n,\frac{|H_{[x']}|}{n^{1 - t}}\right)\right)$. Also $|H_{[x']}| = O\left(2^{b - |[x]|}\right)$. Plug it in gives the $O\left(\min\left(n, \frac{2^{b - |[x]|}}{n^{1 - t}}\right)\right)$ bound for the number of $k$. If $[x]=[]$, of course the number of $k$ is bounded by $n$.
\end{proof}

\section{Main algorithm for directed graphs and analysis} \label{AA}

\subsection{Main algorithm}

Just like in the na\"ive algorithm, we use a bucket to maintain all $d(i,j)$ we have found, and the minimal \emph{unvisited} $d(i,j)$ is guaranteed to be optimal.
Our algorithm enumerates the value $x$ from $0$ to $2^b - 1$ and visit $d(i,j)$ if $d(i,j) = x$. We carefully combine techniques introduced in the previous section into this framework. 

Recall that in Algorithm~\ref{algodivide} of Section~\ref{subsecclassifying} we define $L_{[x]}$ to be the set of low edges (from low outdegree vertices) and $\Gamma_{[x]}$ to be the set of high-low edges in $H'_{[x]}$, which are high-high edges in higher level, then divide the edge set $H_{[x]}$ of high-high edges in $H'_{[x]}$ to $H'_{[x][0]}$ and $H'_{[x][1]}$ and recursively deal with them. Our main algorithm is presented in Algorithm \ref{algomain}.

\begin{algorithm}
\caption{Main algorithm}\label{algomain}
\begin{algorithmic}[1]
\State{$d(i,j)=w(i,j)$ for all edges $(i,j)\in E$, and $d(i,j)=+\infty$ otherwise} \label{alg:initialize}
\For {$x$ from $0$ to $2^b - 1$}
	\For{all prefix $[y]$ of $[x]$}
		\If{$[x] = [y][000\cdots0]$}
		    \State{Run high-low edge initialization for edges in $\Gamma_{[y]}$} \label{alg:high-low:init}
		\EndIf
		\If {$[x] = [y][100\cdots0]$}
			\State{$P_{[y][0]} = \{d(i,j) \mid [y][0]\text{ is a prefix of }[d(i,j)]\}$}
			%\State{$E = H_{[y]}$}
			\State{Append high-high edges in $H'_{[y][1]}$ to paths in $P_{[y][0]}$} \label{alg:high-high}
		\EndIf
	\EndFor
	\For{$d(i,j) = x$}
	    \State{Mark $d(i,j)$ as visited, add $d(i,j)$ to $P$.}
	    \For {all prefix $[y]$ of $[d(i,j)]$}
	        \State{Relax edges $(j,k)\in L_{[y]}$} \label{alg:low}
	        \State{Relax edges $(j,k)\in \Gamma_{[y]}$ by Algorithm~\ref{algoupd}}\label{alg:high-low:upd}
		%    \If{$j$ is a vertex with high degree in $G_{[y]}$}
		 %   	\State{Add $d(i,j)$ to $P_{[y]}$ and run Table \ref{dynamicP}}\label{alg:high-low:upd}
	    %	\Else
		 %   	\State{Enumerate all out-going edges of $j$ in $G_{[y]}$ to perform update for low edges.} \label{alg:low}
		  %  \EndIf
	    \EndFor
	\EndFor
\EndFor
\end{algorithmic}
\end{algorithm}

When visiting an optimal path $d(i,j)$, we need to relax all edges $(j,k)$ which are larger than $d(i,j)$. 
\begin{observation} \label{threecases}
For $[d(i,j)] = [x]$, every edge $(j,k)$ larger than $d(i,j)$ must be one of the following three cases: (so $[y]$ is a prefix of both $[x]$ and $[w(j,k)]$.)
\begin{itemize}
    \item $(j,k)\in L_{[y]}$ for some prefix $[y]$ of $[x]$
    \item $(j,k)\in \Gamma_{[y]}$ for some prefix $[y]$ of $[x]$
    \item $(j,k)\in H'_{[y][1]}$ where $[y][0]$ is a prefix of $[x]$
\end{itemize}
\end{observation}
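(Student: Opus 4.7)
The plan is to compare the bit strings $[x] := [d(i,j)]$ and $[w(j,k)]$ and follow the edge $(j,k)$ through the recursive \textsc{Divide} procedure of Algorithm~\ref{algodivide}. Let $[y^*]$ denote the longest common prefix of $[x]$ and $[w(j,k)]$. Since all edge weights are distinct $b$-bit integers and $w(j,k) > d(i,j)$, the string $[y^*]$ is a proper prefix of each, and the bit immediately following $[y^*]$ is $0$ in $[x]$ and $1$ in $[w(j,k)]$. This pinpoints the place where the two binary representations diverge.

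Next, I would track where \textsc{Divide} places $(j,k)$. The recursion visits $E'_{[z]}$ for successive prefixes $[z]$ of $[w(j,k)]$; at each level the edge is placed into $L_{[z]}$, $\Gamma_{[z]}$, or $H_{[z]}$, and in the last case forwarded to $H'_{[z][b]}$ where $[b]$ is the next bit of $[w(j,k)]$ after $[z]$. Let $[z^*]$ be the first prefix at which $(j,k)$ lands in some $L_{[z]}$ or $\Gamma_{[z]}$, undefined if this never happens. If $[z^*]$ exists and $|[z^*]| \le |[y^*]|$, then $[z^*]$ is a prefix of $[y^*]$, hence of $[x]$, and the membership $(j,k) \in L_{[z^*]}$ or $(j,k) \in \Gamma_{[z^*]}$ matches the first or second bullet with $[y] := [z^*]$. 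Otherwise, at level $[y^*]$ the edge is still high-high, i.e., $(j,k) \in H_{[y^*]}$, and the next bit of $[w(j,k)]$ being $1$ routes it into $H'_{[y^*][1]}$; since $[y^*][0]$ is by construction a prefix of $[x]$, taking $[y] := [y^*]$ yields the third bullet.

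I do not anticipate any genuine obstacle: the argument is a short case analysis driven by the divergence bit between $[x]$ and $[w(j,k)]$. The only point that deserves care is that in the second case one must pick $[y]$ to be the divergence prefix $[y^*]$ itself rather than the deepest level the edge eventually reaches, because the third bullet requires $[y][0]$ to be a prefix of $[x]$, and this holds precisely at $[y^*]$.
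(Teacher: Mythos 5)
Your proposal is correct and takes essentially the same route as the paper's proof: both identify $[y^*]=LCP([x],[w(j,k)])$, note that the divergence bit is $0$ in $[x]$ and $1$ in $[w(j,k)]$, and split on whether the edge drops into some $L_{[y']}$ or $\Gamma_{[y']}$ at a prefix $[y']$ of $[y^*]$ (giving the first two bullets) or remains high-high at level $[y^*]$, in which case it is routed into $H'_{[y^*][1]}$ (the third bullet). Your write-up merely makes the tracking through \textsc{Divide} more explicit; no gaps.
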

\begin{proof}
Consider the longest common prefix $[y]=LCP([x],[w(j,k)])$. If $(j,k)$ is not in the $L_{[y']}$ or $\Gamma_{[y']}$ for any prefix $[y']$ of $[y]$, it must be in $H_{[y]}$. Since $[y]$ is the longest common prefix and $w(j,k)$ is larger than $d(i,j)$, $[d(i,j)]$ has prefix $[y][0]$ and $[w(j,k)]$ has prefix $[y][1]$, thus $(j,k)$ is in $H'_{[y][1]}$.
\end{proof}

Thus, we can simply relax the edges of the first type, and use the method in Section~\ref{high-high} to relax the edges in $H'_{[y][1]}$ when all of the optimal paths with prefix $[y][0]$ have been visited.
The method for high-low edges $\Gamma_{[y]}$ is like a dynamic data structure: we initialize it when $[x]=[y][0\cdots 0]$, and update it when relaxing an edge in $\Gamma_{[y]}$.

%For edges in $\Gamma_{[y]}$, we need to preprocess when all paths smaller than $[y][0\cdots 0]$ has been found as in Section~\ref{high-low}. The formal proof is given in the next subsection.

\paragraph*{High-high edges}

For high-high edges, when $[x] = [y][100\cdots0]$, before those $d(i,j) = x$ are visited, we append edges in $H'_{[y][1]}$ to paths in $P_{[y][0]} = \{d(i,j) \mid [y][0]\text{ is a prefix of }[d(i,j)]\}$ using the technique introduced in Section \ref{high-high}. See Line \ref{alg:high-high}, Algorithm \ref{algomain}. 

In Section~\ref{high-high}, we have two guarantees. Now we check them one by one:

\begin{itemize}
	\item Because each edge in $H'_{[y][1]}$ has prefix $[y][1]$, and each path in $P_{[y][0]}$ has prefix $[y][0]$, the maximum weight in $P_{[y][0]}$ is smaller than the minimum weight of $H'_{[y][1]}$. At the time of $[x] = [y][100\cdots0]$, all paths in $P_{[y][0]}$ are optimal.
	\item Since $[x] = [y][100\cdots0]$, all $[d(i,j)] < [y][1]$ are visited, and none of $[d(i,j)] \geq [y][1]$ are visited yet. 
\end{itemize}

\paragraph*{High-low edges}

%Before proceed, we first define some notation for high-low edges. Suppose $[x] = [x'][0/1]$. Namely, $[x']$ is the prefix of $[x]$ which is only one bit shorter.

%Because $[x']$ is a prefix of $[x]$, by definition, $G_{[x]}$ is a subgraph of $G_{[x']}$. Since high-low edges in $G_{[x']}$ are already handled, there is no need to consider them again while processing $G_{[x]}$. 

%Motivated by this, we define $G'_{[x]}$ to be the subgraph containing every edge which is high-low in $G_{[x]}$ but not high-low in $G_{[x']}$. Namely, $(j,k) \in G_{[x']}$ if and only if the following two condition holds:
%\begin{itemize}
%	\item $j$ is high in $G_{[x]}$. ($j$ is also high in $G_{[x']}$ because $G_{[x]}$ is a subgraph of $G_{[x']}$.)
%	\item $k$ is low in $G_{[x]}$ but high in $G_{[x']}$. 
%\end{itemize}

We initialize for $\Gamma_{[y]}$ when $[x] = [y][000\cdots0]$ before we visit those $d(i,j) = x$. See Line \ref{alg:high-low:init} of Algorithm \ref{algomain}. All $[d(i,j)] < [y]$ are visited, and none of $[d(i,j)] \geq [y]$ are visited. Once a $d(i,j)$ within the range $[y][000\cdots0] \sim [y][111\cdots1]$ is visited, we use the approach in Algorithm~\ref{algoupd}.

\subsection{Correctness}

We now prove the correctness of our algorithm. Suppose the last edge of $OPT(i,k)$ is $(j,k)$. Then $d(i,k)$ is correctly computed before it is visited if and only if the following two conditions holds: 
\begin{itemize}
    \item If $i\neq j$, $d(i,j)$ is correctly computed before it is visited. 
    \item After $d(i,j)$ is visited, before we visit $d(i,k)$, $d(i,k)$ is updated by relaxing the edge $(j,k)$ w.r.t. $d(i,j)$.  
\end{itemize}

We prove the second condition holds for every $d(i,j)$, and the first one simply follows from induction. 

Suppose $[z] = LCP([OPT(i,j)], [OPT(i,k)]) = LCP([OPT(i,j)], [w(j,k)])$. By Observation~\ref{threecases} the last edge $(j,k)$ must be in one of the three cases, so we check them one by one. %(By definition, the weight $w$ of edge $(j,k)$ equals $OPT(i,k)$. For convenience, we write $[w]$ instead of $[OPT(i,k)]$ in following discussion.) 

\begin{lemma} \label{Correct1}
If $(j,k)$ is in $L_{[y]}$ for some prefix $[y]$ of $[z]$, and $d(i,j)$ is correctly computed before visited, then $d(i,k)$ is also correctly computed before visited. 
\end{lemma}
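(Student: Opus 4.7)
The plan is to trace through the main algorithm and verify that the Line~\ref{alg:low} relaxation correctly propagates the optimal path when the last edge of $OPT(i,k)$ is a low edge. The statement is essentially a bookkeeping check about the order in which values become correct, so the argument should be a short direct verification rather than a clever reduction.

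First, by the inductive hypothesis $d(i,j)=OPT(i,j)$ at the moment it is visited. Since $[y]$ is a prefix of $[z]=LCP([OPT(i,j)],[w(j,k)])$, it is in particular a prefix of $[OPT(i,j)]=[d(i,j)]$. Therefore, when Algorithm~\ref{algomain} processes the visit of $d(i,j)$, its inner loop over all prefixes $[y']$ of $[d(i,j)]$ will include $[y']=[y]$, and at Line~\ref{alg:low} it relaxes every outgoing edge of $j$ lying in $L_{[y]}$. By assumption $(j,k)\in L_{[y]}$, and since $(j,k)$ is the final edge of the nondecreasing path $OPT(i,k)$ extending $OPT(i,j)$ we have $w(j,k)\ge OPT(i,j)=d(i,j)$, which is strict under the distinct-edge-weight assumption from Section~\ref{prelim}. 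Thus the relaxation indeed executes and yields $d(i,k)\le w(j,k)=OPT(i,k)$.

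Second, because the algorithm only ever sets $d(i,k)$ by a genuine nondecreasing path relaxation, the invariant $d(i,k)\ge OPT(i,k)$ is preserved throughout, so after the above step $d(i,k)=OPT(i,k)$. It remains to argue that this update takes place strictly before $d(i,k)$ itself is visited. But $d$-values are visited in non-decreasing order by the Dijkstra-style bucket, and $OPT(i,k)=w(j,k)>OPT(i,j)=d(i,j)$; hence the visit of $d(i,k)$ occurs strictly after the visit of $d(i,j)$, and in particular after the relaxation we just described.

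The only subtle point is ensuring that this relaxation happens \emph{during} the visit of $d(i,j)$ and not in some postponed phase; this is immediate from the fact that Line~\ref{alg:low} is executed synchronously inside the \textbf{for} loop over visited $d(i,j)$, so there is no timing gap between the visit and the relaxation. The main obstacle, if one can call it that, is simply being pedantic about the prefix relationship $[y]\preceq[z]\preceq[d(i,j)]$ and about the strict inequality $OPT(i,k)>OPT(i,j)$ used to separate the two visits in the bucket queue; both follow from the definition of $[z]$ as an LCP and from distinct edge weights.
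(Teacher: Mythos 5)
Your proposal is correct and follows essentially the same route as the paper's own proof: observe that $[y]$ is a prefix of $[OPT(i,j)]$, so Line~\ref{alg:low} relaxes $(j,k)$ during the visit of $d(i,j)$, and conclude from $OPT(i,j)<OPT(i,k)$ that $d(i,k)$ has not yet been visited at that moment. Your additional remarks (the lower-bound invariant $d(i,k)\ge OPT(i,k)$ and the strictness of $w(j,k)>d(i,j)$ via distinct edge weights) are just a more explicit writing of the same argument.
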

\begin{proof}
Because $[y]$ is also a prefix of $[OPT(i,j)]$, at Line \ref{alg:low}, when we visit $d(i,j)$, $d(i,k)$ is updated by relaxing $(j,k)$. Because $OPT(i,j) < OPT(i,k)$, $d(i,k)$ is not visited yet. 
\end{proof}

\begin{lemma} \label{Correct2}
Suppose $(j,k)$ is in $H'_{[y][1]}$ where $[y][0]$ is a prefix of $[OPT(i,j)]$. If $d(i,j)$ is correctly computed before visited, $d(i,k)$ is also correctly computed before visited.
\end{lemma}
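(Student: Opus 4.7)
The plan is to pinpoint exactly when Algorithm~\ref{algomain} relaxes the edge $(j,k)$ for the pair $(i,k)$, and to check that this sets $d(i,k)$ to $OPT(i,k)$. That event lives in the iteration $x = [y][100\cdots0]$, at which Line~\ref{alg:high-high} fires for the prefix $[y]$. I would first verify that $d(i,j) \in P_{[y][0]}$ at that instant: by hypothesis $[y][0]$ is a prefix of $[OPT(i,j)]$, so $OPT(i,j) < [y][100\cdots0] = x$, and by the induction hypothesis $d(i,j)$ was correctly set to $OPT(i,j)$ before being visited, placing it in $P_{[y][0]}$. I would also note that $OPT(i,k) = w(j,k)$ has prefix $[y][1]$ since $(j,k) \in H'_{[y][1]}$, so $d(i,k) \geq x$ and $d(i,k) \notin P$ at this instant, meaning $(i,k)$ survives into the pair-enumeration of the high-high routine.

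Next I would replay the matrix-multiplication argument of Section~\ref{high-high} on this specific pair. Let $r^\star$ be the index of the split of $k$ in $\bar{H'}_{[y][1]}$ whose weight range contains $w(j,k)$. The entry $A_{i,j}$ equals $1$ (because $d(i,j) \in P_{[y][0]}$) and $B_{j, k'_{r^\star}}$ equals $1$ (because balancing maps $(j,k)$ to $(j, k'_{r^\star})$), hence $C_{i, k'_{r^\star}} > 0$. The routine then scans for the minimum $r$ with $C_{i, k'_r} > 0$ and relaxes every incoming edge of that $k'_r$ against every $d(i,j') \in P$ with $(j', k'_r) \in \bar{H'}_{[y][1]}$. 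The critical step, and the one I expect to be the main obstacle, is showing that this minimum is exactly $r^\star$, so that the edge $(j,k)$ is actually among those relaxed; once this is done, the rest follows mechanically from the description of the subroutine.

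To close that gap I would argue by contradiction: any $r < r^\star$ with $C_{i, k'_r} > 0$ would produce some $j''$ with $d(i,j'') \in P_{[y][0]}$ and an edge $(j'',k)$ of weight at most $R(k'_r) < L(k'_{r^\star}) \leq w(j,k)$; since every path in $P_{[y][0]}$ is already visited and therefore optimal, concatenating $d(i,j'') = OPT(i,j'')$ with $(j'',k)$ yields a valid non-decreasing path from $i$ to $k$ of weight strictly less than $w(j,k) = OPT(i,k)$, contradicting the minimality of $OPT(i,k)$. Hence $r = r^\star$, the relaxation of the original edge $(j,k)$ against $d(i,j)$ sets $d(i,k) \leftarrow w(j,k) = OPT(i,k)$, and since this happens at Line~\ref{alg:high-high} of iteration $x$ before the ``visit'' inner loop, $d(i,k)$ is correctly computed before being visited --- in the very same iteration if $w(j,k) = x$, or in a strictly later one otherwise. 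This completes the induction step for this case.
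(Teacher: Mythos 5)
Your proof is correct and takes essentially the same route as the paper, whose own proof simply points to the high-high relaxation performed at iteration $[x]=[y][100\cdots0]$ (Section~\ref{high-high}), noting that $d(i,j)\in P_{[y][0]}$ is already visited while $d(i,k)$, having prefix $[y][1]$, is not; your verification that the minimal $r$ with $C_{i,k'_r}>0$ equals $r^\star$ merely makes explicit the subroutine correctness that the paper leaves implicit in its ``skip $r'>r$'' remark. One minor nit: in your contradiction step you only need that $d(i,j'')$ is the weight of \emph{some} valid non-decreasing path from $i$ to $j''$ (an invariant preserved by every relaxation), not that it is optimal, which keeps the argument independent of the path-length induction.
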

\begin{proof}
%We claim $[lcp][0]$ is a prefix for $[OPT(i,j)]$, and $[lcp][1]$ is a prefix for $[x]$. Because 
%\begin{itemize}
%	\item $[lcp]$ is the longest common prefix of $[OPT(i,j)]$ and $[w]$.
%	\item $OPT(i,j) < w$ because they form a nondecreasing path.
%\end{itemize}
We can see $[y] = [z]$ from the proof of Observation \ref{threecases}. At Line~\ref{alg:high-high}, when $[x] = [y][100\cdots0]$, $d(i,k)$ is updated by $d(i,j)$ and $(j,k)$. $d(i,j)$ is already visited before because it has prefix $[y][0]$. $d(i,k)$ will be visited later because it has prefix $[y][1]$.
\end{proof}

\begin{lemma} \label{Correct3}
If $(j,k)$ is in $\Gamma_{[y]}$ for some prefix $[y]$ of $[z]$, and $d(i,j)$ is correctly computed before visited, then $d(i,k)$ is also correctly computed before visited.
\end{lemma}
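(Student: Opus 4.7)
The plan is a case analysis on the status of $d(i,k)$ in the bookkeeping of the $\Gamma_{[y]}$-instance of Algorithm~\ref{algoupd} at the moment we begin the visit of $d(i,j)$. The possibility $d(i,k) \in P$ is ruled out immediately: it would mean $d(i,k)$ has already been visited, yet $OPT(i,k) = w(j,k) > OPT(i,j) = d(i,j)$ forces $d(i,k)$ to come strictly after $d(i,j)$ in the Dijkstra-type order. So two cases remain: (A) $d(i,k) \notin P \cup Q$, or (B) $d(i,k) \in Q$. In either case I must exhibit an update that lowers $d(i,k)$ to at most $w(j,k)$ strictly between the visits of $d(i,j)$ and $d(i,k)$.

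For (A), I would locate the split vertex $j'_r$ carrying $(j,k)$ in $\bar{\Gamma}_{[y]}$. Since $d(i,j) = OPT(i,j) \le w(j,k) \le R(j'_r)$ and edge weights are distinct, the skip branch $d(i,j) > R(j'_r)$ is excluded, so we enter branch (b) or branch (c) of Algorithm~\ref{algoupd}'s three-way dispatch. Branch (b) enters the loop unconditionally. In branch (c) we need $C_{i,j'_r} > 0$, but the hypothesis $d(i,k) \notin P \cup Q$ gives $A_{i,k} = 1$, and $B_{k,j'_r} = 1$ by construction, so $C_{i,j'_r} \ge 1$ and the loop is again entered. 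Once inside, the tests $w(j,k) > d(i,j)$ and $d(i,k) \notin P \cup Q$ are met, so the algorithm relaxes $(j'_r, k)$, which by construction is a relaxation of the original edge $(j,k)$.

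For (B), let $t_0$ be the instant at which $d(i,k)$ first entered $Q$. Because $Q$ is local to the $\Gamma_{[y]}$-instance, $t_0$ occurred inside a call to Algorithm~\ref{algoupd} on $\Gamma_{[y]}$. At that instant the algorithm walked over every incoming edge of $k$ inside $\Gamma_{[y]}$ and appended $(i,k)$ to $W(i,j'')$ whenever $d(i,j'') \notin P$. Because $(j,k) \in \Gamma_{[y]}$ is one such incoming edge, and because the present moment is the first time $d(i,j)$ is visited, $d(i,j) \notin P$ held at $t_0$, so $(i,k) \in W(i,j)$ at the time of the current call. Then on line~\ref{WL-relaxation} of that call, $(j,k)$ is relaxed against $d(i,j)$, pulling $d(i,k)$ down to at most $w(j,k)$.

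The most delicate point, and the one I expect to phrase most carefully, is the alignment in case (B): the sets $Q$ and $W(i,\cdot)$ are implicitly indexed by the prefix $[y]$, and one must argue that the $t_0$ scan ran over $\Gamma_{[y]}$ (so that it actually picked up $(j,k)$) rather than over some $\Gamma_{[y^\star]}$ with $y^\star \neq y$. The hypothesis $(j,k) \in \Gamma_{[y]}$ combined with the locality of $Q$ to its $\Gamma$-instance is exactly what supplies this alignment. In either case, combining the resulting update with the trivial lower bound $d(i,k) \ge OPT(i,k) = w(j,k)$ yields $d(i,k) = OPT(i,k)$ before $d(i,k)$ is visited, which closes the induction on the visit order.
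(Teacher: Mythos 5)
Your proposal is correct and takes essentially the same route as the paper: the paper's two-line proof simply invokes the high-low machinery of Section~\ref{high-low} (the three-branch dispatch justified by Observation~\ref{Cij} together with the waiting-list mechanism), which your cases (A) and (B) spell out explicitly. The only point the paper states that you leave implicit is that, because $[y]$ is a prefix of $[OPT(i,j)]$, the initialization of the $\Gamma_{[y]}$-instance at Line~\ref{alg:high-low:init} (when $[x]=[y][00\cdots0]$) has already occurred before $d(i,j)$ is visited, so the bookkeeping ($A$, $C$, $P$, $Q$, $W$) you reason about indeed exists at that moment.
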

\begin{proof}
%By the definition of $G'$, there exists a prefix $[p]$ of $[lcp]$ such that $(j,k) \in G'_{[p]}$. 
Since $[y]$ is a prefix of both $[OPT(i,j)]$ and $[w(j,k)]$, the initialization for $\Gamma_{[y]}$ is done at Line~\ref{alg:high-low:init} when $[x] = [y][000\cdots0]$. 
%Since $[p]$ is a prefix of $[lcp]$, thus a prefix of $[d(i,j)]$. 
After that, $d(i,k)$ is updated when we visit $d(i,j)$ at Line \ref{alg:high-low:upd}. $d(i,k)$ is not visited yet because $OPT(i,k) > OPT(i,j)$. 
\end{proof}

\begin{lemma}
All $d(i,j)$ are correctly computed before visited.
\end{lemma}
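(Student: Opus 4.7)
The natural plan is induction on the visit order of $d(i,j)$, equivalently on the value of $OPT(i,j)$. I would set up the induction hypothesis as: for every pair $(i,j)$ such that $OPT(i,j) < v$, the value $d(i,j)$ equals $OPT(i,j)$ at the moment the algorithm visits it, and this visit happens in the iteration $x = OPT(i,j)$.

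For the base case, I would consider the pair $(i,k)$ whose $OPT(i,k)$ is the smallest (over all pairs with finite optimum). The optimal path can only consist of a single edge $(i,k) \in E$, since a longer non-decreasing path would contain a strictly lighter optimal sub-path, contradicting minimality. Line~\ref{alg:initialize} sets $d(i,k) = w(i,k) = OPT(i,k)$, so when the outer loop reaches $x = OPT(i,k)$, this entry is already correct and gets visited. For the inductive step, take any $(i,k)$ with last edge $(j,k)$ of its optimal path. If $i = j$ then $OPT(i,k) = w(i,k)$, again handled by initialization. Otherwise $OPT(i,j) \le w(j,k) = OPT(i,k)$, and since edge weights are distinct we actually have $OPT(i,j) < OPT(i,k)$, so the induction hypothesis applies to $(i,j)$: the value $d(i,j)$ is optimal when it is visited, strictly before iteration $x = OPT(i,k)$.

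Now I would invoke Observation~\ref{threecases} on the edge $(j,k)$ to split into three cases according to whether $(j,k) \in L_{[y]}$, $(j,k) \in \Gamma_{[y]}$, or $(j,k) \in H'_{[y][1]}$ for the appropriate prefix $[y]$ of $[z] = LCP([OPT(i,j)],[w(j,k)])$. Lemmas~\ref{Correct1}, \ref{Correct2}, and~\ref{Correct3} exactly cover these three situations: each assumes $d(i,j)$ is correctly computed before being visited and concludes that $d(i,k)$ is updated to $\min(d(i,k), w(j,k)) = OPT(i,k)$ between the visit of $d(i,j)$ and the iteration $x = OPT(i,k)$, and that the visit of $d(i,k)$ has not happened yet at the moment of this update. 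Combined with the induction hypothesis this yields that $d(i,k) = OPT(i,k)$ at the time of its visit.

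The only delicate point, and the main obstacle, is the Dijkstra-style invariant that the bucket iteration $x = 0, 1, \dots, 2^b - 1$ really does visit $d(i,k)$ precisely at $x = OPT(i,k)$: one must verify that no stale or overly large value gets visited first, and that no relaxation after $x = OPT(i,k)$ can lower $d(i,k)$ below $OPT(i,k)$. The former follows because the three lemmas guarantee that $d(i,k)$ reaches the value $OPT(i,k)$ before iteration $OPT(i,k)$ begins, so in that iteration $d(i,k)$ is present in the bucket and thus marked visited. The latter holds because any relaxation of an edge $(j',k)$ can only lower $d(i,k)$ to some $w(j',k) \ge OPT(i,j') \ge$ the current iteration's $x$; and more fundamentally $OPT(i,k)$ is by definition the minimum weight of any non-decreasing $(i,k)$-path, so no relaxation can produce a smaller value. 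With this invariant in hand, the three-case analysis closes the induction and proves the lemma.
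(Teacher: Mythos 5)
Your proof is correct and follows essentially the same route as the paper: an induction that reduces the correctness of $d(i,k)$ to that of $d(i,j)$ for the last edge $(j,k)$ of $OPT(i,k)$, closed by Observation~\ref{threecases} together with Lemmas~\ref{Correct1}, \ref{Correct2} and \ref{Correct3}. The only difference is that you induct on the value of $OPT$ (using distinctness of edge weights to get $OPT(i,j) < OPT(i,k)$) rather than on path length as the paper does, and you make the Dijkstra-bucket invariant explicit; both are harmless refinements of the same argument.
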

\begin{proof}
This follows from a simple induction. In the base case, for all length $1$ optimal paths $OPT(i,j)$, they are obviously correctly computed in Line~\ref{alg:initialize}. Then if all length $l - 1$ paths $OPT(i,j)$ are correctly computed before visited, by Lemma \ref{Correct1}, \ref{Correct2}, \ref{Correct3}, all length $l$ paths $OPT(i,j)$ are also correctly computed before visited. 
\end{proof}

\subsection{Running time}

\begin{lemma}
The relaxation for low edges ($L_{[x]}$) takes $\tilde{O}\left(n^{3 - t}\right)$ time in total. (Line \ref{alg:low}) 
\end{lemma}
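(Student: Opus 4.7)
The plan is to account for the total work at Line~\ref{alg:low} by a direct three-factor product: number of visited paths, number of prefixes processed per visit, and work per prefix.

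First I would observe that there are at most $n^2$ pairs $(i,j)$, hence at most $n^2$ values $d(i,j)$ that ever get visited by the Dijkstra-type loop. For a single visited $d(i,j)$, the inner loop iterates over all prefixes $[y]$ of $[d(i,j)]$, and since $|[d(i,j)]| \le b = O(\log n)$, there are only $O(\log n)$ such prefixes.

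Next I would bound the cost per prefix. For a fixed prefix $[y]$, the edge set $L_{[y]}$ is, by definition of \textsc{Divide} (Algorithm~\ref{algodivide}), the set of edges in $E'_{[y]}$ whose tail has outdegree at most $n^{1-t}$ in the subgraph $G'_{[y]}$. In particular, if $j$ contributes any edge $(j,k)$ to $L_{[y]}$, then \emph{all} of $j$'s outgoing edges in $E'_{[y]}$ live in $L_{[y]}$ and there are at most $n^{1-t}$ of them. Therefore the relaxation work at Line~\ref{alg:low} for this particular $(d(i,j),[y])$ pair is $O(n^{1-t})$, even when performed na\"ively as in Algorithm~\ref{algonaive}.

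Multiplying the three bounds gives a total cost of $O(n^2) \cdot O(\log n) \cdot O(n^{1-t}) = \tilde{O}(n^{3-t})$, which is the claim. There is no real obstacle here: the only subtlety is the distinction between ``$j$ has low outdegree in the current $G'_{[y]}$'' (which is what the definition actually controls) and ``$j$ has few outgoing edges of prefix $[y]$'' (which is what we use), and these coincide because $E'_{[y]}$ consists precisely of the edges of prefix $[y]$ that survived the recursion, and a low-outdegree vertex in $G'_{[y]}$ contributes all its $E'_{[y]}$-edges to $L_{[y]}$.
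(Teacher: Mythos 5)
Your proposal is correct and follows essentially the same argument as the paper: $O(n^2)$ visited paths, $O(\log n)$ prefixes each, and $O(n^{1-t})$ relaxation work per prefix because a low-outdegree tail in $G'_{[y]}$ contributes all (at most $n^{1-t}$) of its $E'_{[y]}$-edges to $L_{[y]}$. The extra remark about why low outdegree in the subgraph controls the number of enumerated edges is just a more explicit statement of the paper's one-line justification.
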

\begin{proof}
At Line \ref{alg:low}, we only enumerate $O(n^{1 - t})$ many edges because $j$ has low outdegree in $L_{[y]}$. Since there are only $b = O(\log n)$ many prefix $[y]$ for each $[d(i,j)]$, each $d(i,j)$ takes $\tilde{O}(n^{1 - t})$ time. So in total, these updates take $\tilde{O}(n^{3 - t})$ time for all $O(n^2)$ many $d(i,j)$. 
\end{proof}

\begin{lemma}\label{high-high:enumeration-time}
The relaxation for high-high edges and high-low edges besides matrix multiplication takes $\tilde{O}\left(n^{3 - t}\right)$ time in total.
\end{lemma}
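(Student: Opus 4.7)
The plan is to combine Lemma~\ref{high-high:enumeration} and Lemma~\ref{high-low:enumeration}, which respectively bound the per-prefix relaxation costs by $O(n_{[x][1]} n^{1-t})$ (for high-high edges appearing in the batch at $[x][1]$) and $O(n_{[y]} n^{1-t})$ (for high-low edges in $\Gamma_{[y]}$), and to sum these bounds over all prefixes arising in the recursive \textsc{Divide} procedure of Algorithm~\ref{algodivide}. The total cost is therefore at most
\begin{equation*}
n^{1-t}\cdot\Bigl(\sum_{[x]} n_{[x][1]} \;+\; \sum_{[y]} n_{[y]}\Bigr),
\end{equation*}
so the task reduces to showing that each of the two sums above is $\tilde{O}(n^{2})$.

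The key combinatorial observation I would use is that for any fixed length $\ell\in\{0,1,\dots,b\}$, the binary strings of length $\ell$ are pairwise incomparable as prefixes: every edge weight (and hence every $[OPT(i,j)]$) has exactly one prefix of length $\ell$. Consequently, for each $\ell$,
\begin{equation*}
\sum_{[y]\,:\,|[y]|=\ell} n_{[y]} \;\le\; n^{2},
\end{equation*}
because each of the at most $n^{2}$ optimal paths contributes to exactly one term. Summing over $\ell=0,1,\dots,b$ and using $b=O(\log n)$ gives $\sum_{[y]} n_{[y]} = \tilde{O}(n^{2})$, and the identical argument (applied to prefixes of the form $[x][1]$ of each fixed length) gives $\sum_{[x]} n_{[x][1]}=\tilde{O}(n^{2})$. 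Plugging back in yields the claimed $\tilde{O}(n^{3-t})$ total.

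A minor subtlety, which I would flag but not belabor, is that the sums above really range over only the prefixes $[x]$ that are actually produced by \textsc{Divide} (those whose corresponding edge set is nonempty and falls under the relevant case of Observation~\ref{threecases}). Including all binary prefixes of lengths $0,\dots,b$ as above only inflates the sums, so the derived bound remains valid. I do not anticipate a genuine obstacle: the hard per-prefix accounting was already done in Lemmas~\ref{high-high:enumeration} and~\ref{high-low:enumeration} via the charging to $OPT(i,k)$ with a given prefix, and this final lemma is precisely the telescoping step that turns those per-prefix bounds into a global bound by paying an extra $O(\log n)$ factor for the depth of the prefix tree.
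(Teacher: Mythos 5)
Your proposal is correct and follows essentially the same route as the paper: sum the per-prefix bounds of Lemmas~\ref{high-high:enumeration} and~\ref{high-low:enumeration} over all prefixes, observing that each optimal path is counted in only $O(\log n)$ of the quantities $n_{[y]}$ (one per prefix length), so the sums are $\tilde{O}(n^2)$ and the total is $\tilde{O}(n^{3-t})$. The only cosmetic difference is that the paper folds $n_{[y][1]}$ into $n_{[y]}$ via $n_{[y][1]} \le n_{[y]}$ before summing, rather than bounding the two sums separately.
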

\begin{proof}
By Lemma \ref{high-high:enumeration} and Lemma \ref{high-low:enumeration}, for each $[y]$, the complexity for relaxation is bounded by $(n_{[y]} + n_{[y][1]})n^{1 - t} = O\left(n_{[y]} n^{1 - t}\right)$, where $n_{[y]}$ stands for the number of optimal paths with prefix $[y]$. Since an optimal path can be counted in $O(\log n)$ many $n_{[y]}$, the total time is therefore $\tilde{O}(n^{3 - t})$. 
\end{proof}

\begin{lemma}\label{high-high:matrix-multiplication-time}
The matrix multiplication parts for high-high edge updates and the initialization of high-low edge updates take $\tilde{O}\left(n^{t + \omega}\right)$ time in total. 
\end{lemma}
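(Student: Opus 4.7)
The plan is to bound, separately for each prefix $[x]$ produced by the recursive divide procedure, the per-prefix matrix-multiplication cost supplied by Lemma \ref{high-high:multiplication} (for high-high initialization) and Lemma \ref{high-low:multiplication} (for high-low initialization), and then sum over all prefixes. Both lemmas give the same form $M\bigl(n,\min(n,s),s\bigr)$ with $s := 2^{b-|[x]|}/n^{1-t}$, so I will bound this sum once and absorb the constant difference. I will organize the sum by recursion depth $\ell := |[x]|$, using the two standing facts that at depth $\ell$ there are at most $2^\ell$ prefixes, and that $2^b = \Theta(n^2)$ implies $s = \Theta(n^{1+t}/2^\ell)$.

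After applying Lemma \ref{L0}, the per-prefix cost splits into two regimes according to whether $s \le n$ or $s > n$, which by the formula for $s$ is equivalent to $2^\ell \ge n^t$ or $2^\ell < n^t$. In the regime $s \le n$, the minimum of the three dimensions is $s$, so Lemma \ref{L0} gives $O(n \cdot s^{\omega-1})$ per prefix; multiplying by the $2^\ell$ prefixes at that depth and substituting $s = \Theta(n^{1+t}/2^\ell)$ yields a level contribution of the form $2^{\ell(2-\omega)} \cdot n^{1+(1+t)(\omega-1)}$. Because $\omega > 2$ the factor $2^{\ell(2-\omega)}$ is geometrically decreasing in $\ell$, so the sum over this regime is dominated by its first term at the boundary $2^\ell = n^t$, and a routine exponent calculation shows the boundary value simplifies to $n^{t+\omega}$.

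In the complementary regime $s > n$, the minimum dimension is $n$, so Lemma \ref{L0} gives $O(s \cdot n^{\omega-1})$ per prefix; multiplying by $2^\ell$ cancels against the $2^\ell$ in the denominator of $s$, leaving a level-independent bound of exactly $n^{t+\omega}$. Summing this over the $O(\log n)$ values of $\ell$ in this regime contributes only a logarithmic factor, and combining the two regimes, the geometric series in the first and the logarithmic sum in the second, plus the factor of two from separately counting high-high and high-low instances, produces the desired total of $\tilde{O}(n^{t+\omega})$.

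The main obstacle is the case split and the algebraic check that both regimes produce the same exponent $t+\omega$; once the boundary computation $2^\ell = n^t$ is done and the exponent is verified, the outer sum is immediate because either it is a geometric series in $\ell$ or the per-level bound is already independent of $\ell$.
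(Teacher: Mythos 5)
Your proposal is correct and is essentially the paper's own proof: the same split by recursion depth with the boundary $2^{\ell}$ versus $n^{t}$ (equivalently $s$ versus $n$), the same application of Lemma \ref{L0} with $2^{b}=O(n^{2})$, and the same exponent computation giving $n^{t+\omega}$ per level in both regimes. One tiny caveat: your geometric-decay argument invokes $\omega>2$, which is not known; but this is harmless, since even if $\omega=2$ the per-level bounds are equal and summing over the $O(\log n)$ levels (as the paper does by bounding each term by its value at the boundary $2^{\ell}=n^{t}$) is absorbed into the $\tilde{O}$.
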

\begin{proof}
By Lemma~\ref{high-high:multiplication} and~\ref{high-low:multiplication}, the complexity for matrix multiplication is at most $M\left (n, \min\left(\frac{2^{b - |[y]|}}{n^{1 - t}}, n\right), \frac{2^{b - |[y]|}}{n^{1 - t}}\right)$ for each $[y]$. We fix the length of $[y]$, denoted by $l = |[y]|$, then consider the two cases:

\begin{itemize}
	\item $2^l < n^t$ : There are at most $2^l$ many such $[y]$, and each takes $$M\left(n, n, \frac{2^{b - l}}{n^{1 - t}}\right) = O\left(\frac{n^2 \cdot \frac{2^{b - l}}{n^{1 - t}}}{n^{3 - \omega}}\right) = O\left(n^{t + \omega} \cdot 2^{-l}\right)$$ This follows from both Lemma \ref{L0} and the fact that $2^b = |E| = O(n^2)$. For each $l$, the time complexity is exactly $O(n^{t + \omega})$. So the total complexity is $\tilde{O}(n^{t + \omega})$ since $l = O(\log_2(n))$. 
	\item $2^l \geq n^t$ : There are at most $2^l$ many such $[y]$. Each takes 
	\begin{align*}
	M\left(n, \frac{2^{b - l}}{n^{1 - t}}, \frac{2^{b - l}}{n^{1 - t}}\right) &= O\left(n \cdot \left(\frac{2^{b - l}}{n^{1 - t}}\right)^{2 - (3 - \omega)} \right) \\ &= O\left( n^{(t + 1)(\omega - 1) + 1} \cdot 2^{-l(\omega - 1)} \right) 
	\end{align*}
	So for all $[y]$ of length $l$, it takes $O\left( n^{(t + 1)(\omega - 1) + 1} \cdot 2^{-l(\omega - 2)} \right)$ time. The term $2^{-l(\omega - 2)}$ is maximized when $l$ is minimized, so $2^{-l(\omega - 2)} \leq n^{-t(\omega-2)}$, and the total time for all lengths of $[y]$ is
	\begin{equation*}
	\tilde{O}\left(n^{(t + 1)(\omega-1) + 1 - t(\omega - 2)} \right) = \tilde{O}\left(n^{t + \omega} \right)\end{equation*}
	%This follows from the fact that $2^{-l(\omega - 1)}$ is maximized when $l$ is minimized. So $2^{-l(\omega - 1)} \leq n^{t(\omega - 2)} \cdot 2^{-l}$. So the total complexity is again bounded by $\tilde{O}(n^{t + \omega})$. 
\end{itemize}
\end{proof}

%The analysis for updates of high-low edges (matrix multiplication part at Line \ref{alg:high-low:init} and enumeration part at \ref{alg:high-low:upd}) is quite similar. We first bound the size of the matrices. 

%\begin{lemma}
%The relaxation and update part for handling high-low edges takes %$\tilde{O}\left(n^{3 - t}\right)$ time in total. (Line \ref{alg:high-low:init}, \ref{alg:high-low:upd}) 
%\end{lemma}
%\begin{proof}
%We first prove the complexity for enumeration is bounded by $\tilde{O}\left(n^{3 - t}\right)$. 
%$OPT(i,k) \in [\min(G'_{[y]}), \max(G'_{[y]}))]$ implies that $OPT(i,k)$ has $[y]$ as a prefix. By Lemma \ref{high-low:enumeration}, for each pair of $OPT(i,k)$ and $[y]$, the complexity for enumeration is bounded by $n^{1 - t}$. The rest of the proof is the same with the proof of Lemma %\ref{high-high:enumeration-time}. 

%Then we prove the complexity for matrix multiplication is bounded by $\tilde{O}\left(n^{t + \omega}\right)$. Let $[y] = [y'][0/1]$. By the definition of $G'_{[y]}$, $k$ is a high vertex in $G_{[y']}$. Thus the number of $k$ is bounded by $\frac{|E_{[y']}|}{n^{1 - t}}$. Namely $n_k = O\left(\frac{2^{m - |y|}}{n^{1 - t}}\right)$. 

%By Lemma \ref{high-low:multiplication}, the multiplication of $AB$ at Line \ref{alg:high-low:init} takes $M\left(n, \min\left(\frac{2^{m-|y|}}{n^{1 - t}}, n\right), \frac{2^{m - |y|}}{n^{1 - t}}\right)$ time for each $[y]$. The $O\left(n^{t + \omega}\right)$ bound of time complexity for matrix multiplication hence follows from the proof of Lemma \ref{high-high:matrix-multiplication-time}. 
%\end{proof}

\begin{theorem}
The \textit{All Pair Non-decreasing Paths} (APNP) problem on directed graphs can be solved in $\tilde{O}\left(n^{\frac{3 + \omega}{2}}\right)$ time. The optimal path of length $l$ between any two vertices can also be explicitly found in $O(l)$ time if we slightly modify the algorithm.
\end{theorem}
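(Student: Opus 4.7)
The plan is to put together the three running time lemmas already established and optimize the free parameter $t$. By the correctness proofs (Lemmas \ref{Correct1}, \ref{Correct2}, \ref{Correct3} and the induction on path length), Algorithm \ref{algomain} correctly computes $d(i,j)=OPT(i,j)$ for all pairs, so only the runtime and path recovery remain.

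First I would sum the three contributions: low-edge relaxation costs $\tilde{O}(n^{3-t})$; high-high and high-low relaxations (excluding matrix products) cost $\tilde{O}(n^{3-t})$ by Lemma \ref{high-high:enumeration-time}; and all matrix multiplications (both the high-high appending step and the high-low initialization/updates) cost $\tilde{O}(n^{t+\omega})$ by Lemma \ref{high-high:matrix-multiplication-time}. The bucket heap maintaining unvisited $d(i,j)$ contributes only $O(n^2 + 2^b) = O(n^2)$ since $2^b = O(n^2)$, which is absorbed. The total is therefore $\tilde{O}(n^{3-t} + n^{t+\omega})$.

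Next I would choose $t$ to balance the two terms. Setting $3 - t = t + \omega$ yields $t = \tfrac{3-\omega}{2}$, giving a total running time of $\tilde{O}(n^{(3+\omega)/2})$. I would note that this is indeed the $t$ announced in Section~\ref{subsecclassifying}, and that since $\omega < 2.373$, we have $t > 0$, so the high/low degree classification is nontrivial.

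For the path-recovery addendum, I would modify the algorithm to maintain, for each pair $(i,k)$, a predecessor pointer $\pi(i,k)$ that records the penultimate vertex $j$ whenever $d(i,k)$ is updated by a relaxation of edge $(j,k)$ with respect to $d(i,j)$. Each of the three relaxation routines (low edges at Line \ref{alg:low}, high-high edges at Line \ref{alg:high-high}, and high-low edges in Algorithm \ref{algoupd}) performs an explicit update $d(i,k) \gets w(j,k)$ via a known $j$, so setting $\pi(i,k) \gets j$ adds only $O(1)$ to each relaxation and does not affect the asymptotic runtime. Once all $d(i,j)$ are finalized, the optimal path from $i$ to $k$ of length $l$ is recovered by following $k, \pi(i,k), \pi(i,\pi(i,k)), \ldots$ until reaching $i$, which takes $O(l)$ time. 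The main (minor) obstacle is verifying that the high-high step at Line \ref{alg:high-high} records the correct $j$: after the rectangular matrix product, when we find the smallest $r$ with $C_{i,k'_r}>0$ and scan the $O(n^{1-t})$ incoming edges of $k'_r$ to perform the relaxation, we simply set $\pi(i,k)$ to the endpoint $j$ of whichever edge achieves the current minimum, which is available within the same inner loop.
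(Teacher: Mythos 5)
Your proposal is correct and matches the paper's own proof: both simply balance the $\tilde{O}(n^{3-t})$ relaxation costs against the $\tilde{O}(n^{t+\omega})$ matrix-multiplication cost by choosing $t=\frac{3-\omega}{2}$, and both recover paths in $O(l)$ time by recording, for each $OPT(i,k)$, the last edge (equivalently your predecessor pointer $\pi(i,k)$) at the moment of relaxation. Your version is merely more explicit about the bookkeeping, which is fine.
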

\begin{proof}
We choose $t = \frac{3 - \omega}{2}$. The running time of this algorithm follows from previous lemmas. Since all optimal paths $OPT(i,k)$ are obtained by relaxation of edges $(j,k)$, we can store the last edge $(j,k)$ for each $OPT(i,k)$, so retrieving the optimal path can be done in $O(l)$ time.
\end{proof}

\section{Near-optimal algorithm for undirected APNP}\label{undirectedsection}
In this section we show that the APNP problem on undirected graphs can be solved in time $\tilde O (n^2)$. Let $G = (V,E)$ be an undirected graph with $m=O(n^2)$ edges $e_1,e_2,\dots,e_m$. For simplicity we assume that the weights $w(e_i)$ of edges are distinct using Lemma \ref{equal} in Appendix~\ref{equal-weight}. But here we allow multiple edges in $G$.

\subsection{Basic algorithm}
Our algorithm proceeds as follows: Beginning from an empty graph, we successively insert the edges in ascending order of weights. An $n\times n$ zero-one matrix $A$ is maintained and updated after each insertion, where $A_{i,j}=1$ if an only if there exists a non-decreasing path $(i,j)$ consisting only of edges that are already inserted. If $A_{i,j}$ is updated from 0 to 1 when inserting edge $e$, we have $OPT(i,j) = w(e)$. Algorithm~\ref{alg:undirected1} below is a straightforward implementation of this algorithm in time $O(n^3)$, in which non-infinity elements of matrix $D$ denote the optimal answers.

\begin{algorithm}
\caption{Basic algorithm for undirected APNP in time $O(n^3)$} \label{alg:undirected1}
\begin{algorithmic}[1]
\State{Sort the edges so that $w(e_1)<w(e_2)<\dots<w(e_m)$}
\State{Initialize matrix $A$ with all zeros}
\State{Initialize matrix $D$ with all $+\infty$}
\For{$i \gets 1,2,\dots,n$} 
\State{$A_{i,i} \gets 1$}
\EndFor
\For{$k \gets 1,2,\dots,m$} 
\State{Let $i,j$ be the two endpoints of edge $e_k$}
\For{$s  \gets 1,2,\dots,n$}
\If{$A_{s,i}=1$ and $A_{s,j}=0$}
\State{$A_{s,j} \gets 1$}
\State{$D_{s,j} \gets  w(e_k)$}
\EndIf
\If{$A_{s,j}=1$ and $A_{s,i}=0$}
\State{$A_{s,i} \gets 1$}
\State{$D_{s,i} \gets  w(e_k)$}
\EndIf
\EndFor
\EndFor
\end{algorithmic}
\label{basicundir}
\end{algorithm}

\begin{lemma}
\label{unlemma}
When the $k$-th iteration of the for-loop at Line 6 is finished, for every pair of vertices $i,j$, $A_{i,j}=1$ if an only if there exists a non-decreasing path $(i,j)$ consisting only of edges in $E_k = \{e_1,\dots,e_k\}$. 
\end{lemma}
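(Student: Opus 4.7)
The plan is to prove the lemma by induction on $k \in \{0, 1, \ldots, m\}$, treating $k=0$ as the state just before the for-loop at Line~6 starts. The base case is immediate: at that moment $A$ has $1$s precisely on the diagonal (from the initialization at Lines~4--5), which matches the fact that the only zero-edge non-decreasing paths are the trivial paths from $s$ to $s$. To avoid clashing with the lemma's indices, I write $e_k = \{u, v\}$ for the edge inserted in iteration $k$ and use $s, t$ for arbitrary vertices.

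For the inductive step, assume the invariant holds after iteration $k-1$. For the ``only if'' direction, suppose $A_{s,t} = 1$ after iteration $k$. Either $A_{s,t}$ was already $1$ at the start of iteration $k$, in which case the induction hypothesis supplies a non-decreasing path from $s$ to $t$ in $E_{k-1} \subseteq E_k$; or $A_{s,t}$ was flipped from $0$ to $1$ by one of the two update checks, which forces $t \in \{u,v\}$. If $t = v$, the triggering condition guarantees $A_{s,u} = 1$ at the start of the iteration, so by the IH there is a non-decreasing path from $s$ to $u$ in $E_{k-1}$; appending $e_k$ yields the desired path to $v$ in $E_k$, because $w(e_k)$ is strictly larger than every weight in $E_{k-1}$. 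The case $t = u$ is symmetric.

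For the ``if'' direction, suppose there is a non-decreasing path $P$ from $s$ to $t$ in $E_k$. If $P$ does not use $e_k$, the IH gives $A_{s,t} = 1$ after iteration $k-1$, which persists since no entry is ever reset to $0$. If $P$ does use $e_k$, then because $e_k$ has strictly greater weight than every other edge in $E_k$ (weights being distinct and $e_k$ inserted last), $e_k$ must be the final edge of $P$, so $t \in \{u,v\}$. The prefix of $P$ is then a non-decreasing path from $s$ to the other endpoint of $e_k$ in $E_{k-1}$; assuming $t = v$, the IH gives $A_{s,u} = 1$ at the start of iteration $k$, and the check at Lines~10--12 either sets $A_{s,v} \gets 1$, or $A_{s,v}$ is already $1$. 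The case $t = u$ is symmetric.

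The only subtle point, which I expect to be the main thing to verify carefully, is that the two consecutive update checks at Lines~10--12 and Lines~13--15 do not spuriously cascade within a single iteration of $s$: a four-case analysis on the initial values of $(A_{s,u}, A_{s,v})$ confirms that whenever the first check flips $A_{s,v}$ from $0$ to $1$, the second check's condition fails because $A_{s,u}$ remains $1$; symmetric reasoning applies to the other direction. This ensures that each iteration's updates correspond to the ``read the old $A$, write the new $A$'' semantics that the induction implicitly assumes, and the proof goes through.
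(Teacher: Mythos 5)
Your proof is correct and takes essentially the same route as the paper's: induction on $k$, noting that $1$-entries persist, and that $e_k$, being the maximum-weight edge of $E_k$, can only serve as the last edge of any new non-decreasing path, so only entries $A_{s,u}$ and $A_{s,v}$ can change and the per-$s$ checks capture exactly these updates. Your additional verification that the two consecutive checks cannot cascade within one iteration is a small point of rigor the paper leaves implicit, but it does not alter the argument.
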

\begin{proof}
 For $k=0$, the for-loop has not started, and $A_{i,j}=1$ if and only if $i=j$. The lemma trivially holds since a path $(i,j)$ with edges in $E_0=\emptyset$ must satisfy $i=j$.
 
 For $1\le k \le n$, suppose the lemma holds for $k-1$. If $A_{i,j}=1$ before the $k$-th iteration, it should still be $1$ after the $k$-th iteration since $E_{k-1}\subset E_k$. Our algorithm never updates $A_{i,j}$  from 1 to 0.
 
 If $A_{s,t}=0$ before the $k$-th iteration, it should be updated to 1 in the $k$-th iteration if and only if there exists a non-decreasing path $(s,t)$ containing $e_k$. Since $w(e_k) = \max_{e\in E_k}w(e)$, it can only appear as the last edge of a non-decreasing path. Let $e_k = (i,j)$. Then the two possibilities of this path are $s\to i \to j$ and $s \to j \to i$. Our algorithm checks each possible $s$ and can find all such new paths.
\end{proof}

\begin{theorem}
 Algorithm \ref{basicundir} correctly computes the APNP matrix $D$.
\end{theorem}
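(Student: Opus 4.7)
The plan is to leverage Lemma~\ref{unlemma} to track the unique moment at which each entry $A_{i,j}$ flips from $0$ to $1$, and to verify that the value written into $D_{i,j}$ at that moment is exactly $OPT(i,j)$.

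First, I would observe that the only write to $D_{i,j}$ sits inside an \textbf{if}-block whose guard is $A_{i,j}=0$ and which simultaneously sets $A_{i,j}\gets 1$; hence $D_{i,j}$ is written at most once, and that write occurs at the unique iteration $k^{*}$ (if any) at which $A_{i,j}$ transitions from $0$ to $1$. By Lemma~\ref{unlemma}, $k^{*}$ is the smallest index such that $E_{k^{*}}=\{e_1,\dots,e_{k^{*}}\}$ admits a non-decreasing $(i,j)$-path. If no such $k^{*}$ exists, $D_{i,j}$ stays at $+\infty$, matching the convention that $OPT(i,j)=+\infty$ when $i$ and $j$ are not joined by any non-decreasing path. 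The diagonal case $i=j$ is handled separately by the initialization $A_{i,i}\gets 1$ at Lines~4--5, so no spurious write to $D_{i,i}$ is ever triggered.

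Next, I would show $w(e_{k^{*}})=OPT(i,j)$. For one direction, let $P^{*}$ be any non-decreasing $(i,j)$-path contained in $E_{k^{*}}$. Its last (and heaviest) edge has weight at most $w(e_{k^{*}})$. By minimality of $k^{*}$, no such path exists in $E_{k^{*}-1}$, so $P^{*}$ must contain $e_{k^{*}}$; since $e_{k^{*}}$ is the heaviest edge in $E_{k^{*}}$ and the path is non-decreasing, $e_{k^{*}}$ is in fact the last edge of $P^{*}$, giving $OPT(i,j)\le w(e_{k^{*}})$. Conversely, for any non-decreasing $(i,j)$-path $P$ in $G$ whose last edge is $e_\ell$, all edges of $P$ have weight at most $w(e_\ell)$ and therefore lie in $E_\ell$; by minimality of $k^{*}$ we get $k^{*}\le \ell$, whence $w(e_{k^{*}})\le w(e_\ell)$. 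Minimizing over $P$ yields $w(e_{k^{*}})\le OPT(i,j)$.

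The main obstacle, such as it is, is the careful bookkeeping in the second step: one must combine the ascending-weight insertion order with the defining property of non-decreasing paths (the last edge is the heaviest) to conclude that the newly inserted edge $e_{k^{*}}$ is necessarily the last edge of the witnessing path. Everything else is a direct application of Lemma~\ref{unlemma} together with the one-shot nature of the $D$-updates, so the final step is simply to assemble these observations into the statement $D_{i,j}=OPT(i,j)$ for all $i,j$ after the main loop finishes.
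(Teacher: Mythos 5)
Your proposal is correct and follows essentially the same route as the paper: invoke Lemma~\ref{unlemma} to identify the unique iteration $k^{*}$ at which $A_{i,j}$ flips from $0$ to $1$, then use the ascending-weight insertion order to conclude $D_{i,j}=w(e_{k^{*}})=OPT(i,j)$. You merely spell out the two inequality directions (and the one-shot nature of the $D$-write, the $+\infty$ case, and the diagonal) that the paper's terse proof leaves implicit.
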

\begin{proof}
From Lemma \ref{unlemma}, if $A_{i,j}$ is updated from $0$ to $1$ in the $k$-th iteration, there exists a non-decreasing path $(i,j)$ using edges $\{e_1,\dots,e_k\}$, but it does not exist when only $\{e_1,\dots,e_{k-1}\}$ can be used. Since $w(e_1)<w(e_2)<\dots<w(e_m)$, we must have $OPT(i,j) = w(e_k)$.
\end{proof}

\subsection{Equality-tests on dynamic strings}
To speed up our algorithm, it is necessary to maintain the zero-one matrix more efficiently. We will use the data structure introduced in \cite{mehlhorn1997maintaining}, which supports  useful operations on a dynamic family of zero-one strings. A zero-one string $s$ of length $n$ can be viewed as an array of elements $s_1,s_2,\dots,s_n$, where $s_i \in \{0,1\}$ for every $i$.

\begin{theorem}[\cite{mehlhorn1997maintaining}]
There is a deterministic data structure that supports the following operations on an initially empty family of zero-one strings.
\begin{itemize}
 \item $\textsc{Equal}(s_1,s_2)$: Test if two strings $s_1,s_2$ are equal.
 \item $\textsc{Makestring}(s,c)$: Create string $s$ with a single element $c \in \{0,1\}$ .
 \item $\textsc{Concatenate}(s_1,s_2,s_3)$: Create a new string $s_3=s_1s_2$ without destroying $s_1$ and $s_2$.
 \item $\textsc{Split}(s_1,s_2,s_3,i)$: Create two new strings $s_2=a_1\dots a_i$ and $s_3=a_{i+1}\dots a_{|s_1|}$ without destroying $s_1 = a_1,\dots a_{|s_1|}$.
\end{itemize}
Let $n$ denote the total length of all strings created.  Then the $m$-th operation takes $\mathrm{poly} \log (nm)$ time.
\end{theorem}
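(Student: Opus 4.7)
The plan is to store each dynamic string as a balanced binary tree whose leaves carry the individual characters and whose internal nodes carry subtree sizes (for instance an AVL, red-black, or weight-balanced tree). Under this representation, \textsc{Makestring} is immediate, and \textsc{Concatenate} and \textsc{Split} become the standard join and split primitives on balanced search trees; each of these affects only $O(\log n)$ nodes on the spines of the trees and therefore costs $O(\log n)$ pointer operations before we worry about equality testing.

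The substantive part is implementing \textsc{Equal} deterministically in polylog time. The idea is to attach to every node of every stored tree a \emph{name}, that is, an integer identifier drawn from a global pool, with the invariant that two nodes represent the same string if and only if they carry the same name. Names are allocated through a deterministic dictionary (e.g., a balanced search tree keyed lexicographically) that maps an already-seen tuple of child names to the identifier previously issued for it; when the algorithm needs to name a new block, it queries this dictionary and either reuses an existing name or allocates a fresh one. Assuming the invariant, \textsc{Equal}$(s_1,s_2)$ is simply a comparison of the two root names in $O(1)$.

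The obstacle, and the reason this is nontrivial, is that the same string can be represented by differently shaped balanced trees after a sequence of joins and splits, so the naive rule ``name of a node $=$ dictionary-lookup of its two children's names'' does \emph{not} force equal strings to share a top-level name. To overcome this I would use a hierarchical \emph{signature} scheme on top of the balanced tree: at each level, partition the current sequence of level-$i$ names into blocks using a purely local rule (for example, marking positions that are local minima of the name sequence, in the spirit of locally-consistent parsing), name each block by a dictionary lookup on the block's contents, and recurse on the shorter sequence of block names. Locality of the partitioning rule guarantees that each update perturbs the block boundaries at every level in only an $O(1)$-sized window, so every level contributes $\mathrm{poly}\log(nm)$ work; there are $O(\log n)$ levels; and, crucially, equal strings are forced to produce identical block decompositions at every level and therefore the same top-level signature.

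Combining the balanced-tree primitives (for \textsc{Makestring}, \textsc{Concatenate}, \textsc{Split}) with the hierarchical signature maintenance (for \textsc{Equal}) bounds each of the four operations by $\mathrm{poly}\log(nm)$ time, yielding the theorem. The main difficulty I expect is in making the signature scheme simultaneously (i) canonical, so that equal strings always collapse to a single name, and (ii) local, so that edits propagate only through a polylog-size zone at each level; the choice of the partitioning rule is where all the care is needed, while the balanced-tree and dictionary machinery is standard.
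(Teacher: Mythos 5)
This theorem is not proved in the paper at all: it is quoted verbatim from Mehlhorn, Sundar and Uhrig \cite{mehlhorn1997maintaining}, so there is no in-paper argument to compare against. Your sketch is essentially the construction of that cited work: a canonical hierarchical naming (signature) scheme obtained by repeatedly block-decomposing the sequence of names with a locally consistent rule, with names issued by a deterministic dictionary, so that equality reduces to comparing top-level names. One caveat: a bare ``local minima of the name sequence'' rule does not quite work (adjacent equal names and long monotone runs break it); the actual scheme first collapses runs and then uses deterministic coin tossing, and an edit perturbs the decomposition at each level in a window of size $O(\log^* n)$ rather than $O(1)$ --- but this only affects polylogarithmic factors, so your overall plan and bounds are consistent with the cited result.
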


Creating a string $s$ can be performed using $O(|s|)$ \textsc{Makestring} and \textsc{Concatenate} operations. Modifying one bit of a string can be performed using $O(1)$ \textsc{Split}, \textsc{Makestring} and \textsc{Concatenate} operations. Given two strings $s,t$, using standard binary search,  we can find the smallest $i$ (if exists) such that $s_i \neq t_i$ using $O(\log |s|)$ \textsc{Split} and \textsc{Equal} operations.

\subsection{Faster algorithm}
Using this data structure, our algorithm can be implemented as in Algorithm~\ref{alg:undirected2}. In this new algorithm, matrix $A$ is replaced by $n$ zero-one strings $B_i$, which are maintained by the data structure. $B_{i,j}$ is the $j$-th element in $B_i$. $B_{i,s}$ corresponds to $A_{s,i}$ in the previous basic algorithm, that is, $B_i$ corresponds to the $i$-th column of $A$. The correctness of this algorithm is obvious since it is equivalent to Algorithm~\ref{alg:undirected1}.

\begin{algorithm} 
\caption{Faster algorithm for undirected APNP using equality-test data structure} \label{alg:undirected2}
\begin{algorithmic}[1] 
\State{Sort the edges so that $w(e_1)<w(e_2)<\dots<w(e_m)$}
\State{Initialize matrix $D$ with all $+\infty$}
\For{$i \gets 1,2,\dots,n$} 
\State{Create a zero-one string $B_i$ of length $n$, where $B_{i,i} = 1, B_{i,j} = 0 \,(j\neq i)$}
\EndFor
\For{$k \gets 1,2,\dots,m$} 
\State{Let $i,j$ be the two endpoints of edge $e_k$}
\While{$B_i\neq B_j$} 
  \State{Find the smallest $s$ such that $B_{i,s} \neq B_{j,s}$}\Comment{Using $O(\log n)$ operations}
  \State{$B_{i,s} \gets 1$}
  \State{$B_{j,s} \gets 1$}
  \State{$D_{s,i} \gets \min\{D_{s,i}, w(e_k)\}$}
  \State{$D_{s,j} \gets \min\{D_{s,j}, w(e_k)\}$}
\EndWhile
\EndFor
\end{algorithmic}
\end{algorithm}

\begin{theorem}
The all pairs non-decreasing paths (APNP) problem on undirected graphs can be solved in $\tilde{O}(n^{2})$ time.
\end{theorem}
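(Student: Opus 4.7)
Correctness of Algorithm~\ref{alg:undirected2} has already been argued: after swapping the roles of rows and columns, each iteration of its inner \textbf{while} loop enacts exactly the same update on the bitmatrix as the two \textbf{if}-blocks of Algorithm~\ref{alg:undirected1}, so Lemma~\ref{unlemma} and the correctness theorem for Algorithm~\ref{alg:undirected1} transfer verbatim. Thus the plan is entirely a running-time analysis.

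The heart of the argument is a potential-style bound on the total number of iterations of the \textbf{while} loop at Line~7, summed over all $m$ edges. Each iteration finds the smallest $s$ with $B_{i,s}\neq B_{j,s}$; since exactly one of these two bits is $0$ and the other is $1$, executing the two assignments $B_{i,s}\gets 1$ and $B_{j,s}\gets 1$ flips exactly one bit from $0$ to $1$. Bits are never flipped back to $0$, and the $n$ strings have total length $n^{2}$, so the total number of \textbf{while} iterations across the whole algorithm is at most $n^{2}$.

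Each such iteration performs $O(\log n)$ data-structure operations (one $\textsc{Equal}$ to enter the loop, an $O(\log n)$-step binary search using $\textsc{Split}$ and $\textsc{Equal}$ to locate $s$, and $O(1)$ bit modifications via $\textsc{Split}/\textsc{Makestring}/\textsc{Concatenate}$), and each operation costs $\mathrm{poly}\log(nm)=\mathrm{poly}\log n$ by the theorem of Mehlhorn, Sundar and Uhrig. Additionally, for each of the $m=O(n^{2})$ edges there is one extra $\textsc{Equal}$ call that terminates the loop, contributing $\tilde O(n^{2})$ more. Sorting the $m$ edges takes $O(n^{2}\log n)$, and the initialization loop creating the $n$ strings $B_{i}$ of length $n$ uses $O(n^{2})$ data-structure operations. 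Summing everything gives total running time $\tilde O(n^{2})$, which yields the theorem.

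I do not expect a real obstacle here; the only subtle point to make clean is the ``at most one bit flipped per iteration'' claim, which depends on the fact that we only enter the loop body when $B_{i,s}\neq B_{j,s}$ and therefore one of the two assignments is a no-op. Once that is stated, the double counting of bits against iterations, and of iterations against poly-logarithmic data-structure work, is immediate.
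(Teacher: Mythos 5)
Your proposal is correct and follows essentially the same route as the paper: correctness by equivalence with the basic $O(n^3)$ algorithm, and a charging argument that each \textbf{while}-iteration turns one bit of some $B_i$ from $0$ to $1$, bounding the iterations by $n^2$ with polylogarithmic data-structure cost per iteration. Your explicit remark that exactly one of the two assignments is a genuine flip is a slightly more careful phrasing of the paper's ``one element of either $B_i$ or $B_j$ is updated from 0 to 1,'' but the argument is the same.
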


\begin{proof}
To analyze the time complexity, note that creating the strings $\{B_i\}$ takes $O(n^2)$ \textsc{Makestring} and \textsc{Concatenate} operations.
The for-loop at Line 5 is executed for $m=O(n^2)$ times. Every time the while-loop from Line 8 to Line 12 is executed, one element of either $B_i$ or $B_j$ is updated from 0 to 1.
Since the total length of strings $B_i$ is $n^2$, and elements are never changed from 1 back to 0, the number of updates is $O(n^2)$.
Each execution of the comparison and the while-loop takes $\mathrm{poly} \log (n)$ time by using the data structure. (Note that comparing $B_{i}$ and $B_{j}$ and finding the first different bit take $O(\log n)$ operations by binary search.) Hence, the total time of this algorithm is $O(n^2 \mathrm{poly} \log (n)) = \tilde O(n^2)$.
\end{proof}

\bibliography{lipics-v2019-sample-article}

\begin{thebibliography}{10}

\bibitem{Coppersmith:1987:MMV:28395.28396}
D.~Coppersmith and S.~Winograd.
\newblock Matrix multiplication via arithmetic progressions.
\newblock In {\em Proceedings of the Nineteenth Annual ACM Symposium on Theory
  of Computing}, STOC '87, pages 1--6, New York, NY, USA, 1987. ACM.
\newblock URL: \url{http://doi.acm.org/10.1145/28395.28396}, \href
  {http://dx.doi.org/10.1145/28395.28396} {\path{doi:10.1145/28395.28396}}.

\bibitem{czumaj2007faster}
Artur Czumaj, Miros{\l}aw Kowaluk, and Andrzej Lingas.
\newblock Faster algorithms for finding lowest common ancestors in directed
  acyclic graphs.
\newblock {\em Theoretical Computer Science}, 380(1-2):37--46, 2007.

\bibitem{dijkstra1959note}
Edsger~W. Dijkstra.
\newblock A note on two problems in connexion with graphs.
\newblock {\em Numerische mathematik}, 1(1):269--271, 1959.

\bibitem{duan2018improved}
Ran Duan, Yong Gu, and Le~Zhang.
\newblock Improved time bounds for all pairs non-decreasing paths in general
  digraphs.
\newblock In {\em 45th International Colloquium on Automata, Languages, and
  Programming}, pages 44:1--44:14, 2018.

\bibitem{duan2009fast}
Ran Duan and Seth Pettie.
\newblock Fast algorithms for {$(\max, \min)$-matrix} multiplication and
  bottleneck shortest paths.
\newblock In {\em Proceedings of the 20th annual ACM-SIAM Symposium on Discrete
  Algorithms}, pages 384--391. SIAM, 2009.

\bibitem{gall2018improved}
Fran{\c{c}}ois~Le Gall and Florent Urrutia.
\newblock Improved rectangular matrix multiplication using powers of the
  {Coppersmith-Winograd} tensor.
\newblock In {\em Proceedings of the 29th annual ACM-SIAM Symposium on Discrete
  Algorithms}, pages 1029--1046. SIAM, 2018.

\bibitem{le2014powers}
Fran{\c{c}}ois Le~Gall.
\newblock Powers of tensors and fast matrix multiplication.
\newblock In {\em Proceedings of the 39th International Symposium on Symbolic
  and Algebraic Computation}, pages 296--303. ACM, 2014.

\bibitem{mehlhorn1997maintaining}
Kurt Mehlhorn, Rajamani Sundar, and Christian Uhrig.
\newblock Maintaining dynamic sequences under equality tests in polylogarithmic
  time.
\newblock {\em Algorithmica}, 17(2):183--198, 1997.

\bibitem{minty1958a}
George~J. Minty.
\newblock A variant on the shortest-route problem.
\newblock {\em Operations Research}, 6(6):882--883, 1958.

\bibitem{vassilevska2009all}
Virginia Vassilevska, Ryan Williams, and Raphael Yuster.
\newblock All pairs bottleneck paths and max-min matrix products in truly
  subcubic time.
\newblock {\em Theory of Computing}, 5(9):173--189, 2009.

\bibitem{williams2010nondecreasing}
Virginia~Vassilevska Williams.
\newblock Nondecreasing paths in a weighted graph or: How to optimally read a
  train schedule.
\newblock {\em ACM Trans. Algorithms}, 6(4):70:1--70:24, September 2010.

\bibitem{williams2012multiplying}
Virginia~Vassilevska Williams.
\newblock Multiplying matrices faster than {Coppersmith--Winograd}.
\newblock In {\em Proceedings of the 44th annual ACM Symposium on Theory of
  Computing}, pages 887--898. ACM, 2012.

\end{thebibliography}

\appendix

\section{Handling equal weights} \label{equal-weight}
In this section we discuss the generalizations of our algorithm to multiple edges and equal edge weights.
\begin{observation}
We can generalize our algorithm for directed graphs to handle multiple edges, but still with distinct edge weights. When the number of edges $m=O(n^2)$, the running time is still $\tilde{O}\left(n^{\frac{3 + \omega}{2}}\right)$
\end{observation}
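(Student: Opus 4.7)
The plan is to argue that the directed APNP algorithm extends to multigraphs essentially unchanged, since every quantity that drives the complexity analysis is controlled by $m$ and by the distinctness of edge weights, not by the simplicity of the underlying graph. The central observation is that with $m = O(n^2)$ distinct weights we still have $2^b = O(n^2)$, so every size bound that was stated in terms of $2^{b-|[x]|}$ continues to hold verbatim.

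First I would record the trivial modifications needed to make Algorithm~\ref{algomain} well defined on a multigraph. Line~\ref{alg:initialize} is replaced by $d(i,j) = \min\{w(e) : e = (i,j) \in E\}$ when at least one edge joins $i$ and $j$ (and $+\infty$ otherwise); this initialization costs $O(m) = O(n^2)$. The outdegree of a vertex $j$ in a subgraph $G'_{[x]}$ is redefined to be the total number of outgoing edges counted with multiplicity, and similarly for indegree, so the low/high classification and the recursive partition \textsc{Divide} in Algorithm~\ref{algodivide} act on the multiset of edges in the obvious way. Because weights remain distinct, each edge still has a unique prefix $[x]$, and the edge sets $L_{[x]}$, $H_{[x]}$, $\Gamma_{[x]}$, $H'_{[x][0]}$, $H'_{[x][1]}$ are unambiguously defined.

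Next I would reverify the two size bounds driving the analysis. The inequality $|H_{[x]}| \le 2^{b-|[x]|}$ follows from distinctness of weights alone, hence transfers directly. The balancing construction of Table~\ref{balancing} operates on the sorted list of incident edges of a vertex; parallel edges merely appear as consecutive entries in that list, and the resulting number of split vertices is still $O(|E'|/n^{1-t})$. Therefore the matrix dimensions asserted in Lemma~\ref{high-high:multiplication} and Lemma~\ref{high-low:multiplication} are unchanged, and the inductive bound on the number of in-coming vertices $k$ of $\Gamma_{[x]} \subseteq H_{[x']}$ (used in Lemma~\ref{high-low:multiplication}) still holds because it too is derived from $|H_{[x']}| \le 2^{b-|[x']|}$. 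Consequently the matrix-multiplication cost in Lemma~\ref{high-high:matrix-multiplication-time} remains $\tilde O(n^{t+\omega})$.

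Finally, the relaxation bounds of Lemma~\ref{high-high:enumeration} and Lemma~\ref{high-low:enumeration} are amortized against pairs $(i,j)$, not edges: each $O(n^{1-t})$ chunk of work is charged either to some optimal path $OPT(i,j)$ (of which there are at most $n^2$) or to a newly inserted entry of $P$ or $Q$ (also at most $n^2$ each), so these bounds also hold without modification, and the total relaxation cost stays at $\tilde O(n^{3-t})$. Choosing $t = (3-\omega)/2$ gives the claimed $\tilde O(n^{(3+\omega)/2})$ runtime. The one delicate step, which I would verify explicitly, is that collapsing parallel edges into a single $1$ in the boolean matrix $B$ does not lose information: the product $C = AB$ is used only to witness \emph{existence} of a relevant incident edge, after which the explicit relaxation loop of Algorithm~\ref{algoupd} iterates over all (at most $n^{1-t}$) actual edges incident to the split vertex $k'_r$ or $j'_r$ and therefore recovers the correct minimum weight; the same remark applies to the boolean product used for $H'_{[x][1]}$. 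With this check in place the correctness and timing arguments of Section~\ref{AA} go through unchanged.
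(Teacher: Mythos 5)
Your proposal is correct and follows essentially the same route as the paper: the Dijkstra framework and the low/high-low relaxations carry over verbatim because every bound in the analysis depends only on edge counts and the distinctness of weights, so $2^b=O(n^2)$ still controls all the $2^{b-|[x]|}$ size bounds, the balancing, and the charging arguments. The only minor difference is in the high-high step, where the paper simply keeps the minimum-weight parallel edge between each pair in $H'_{[x][1]}$, whereas you retain all parallel edges and observe that the boolean product only certifies existence while the subsequent relaxation loop over the at most $n^{1-t}$ edges of each split vertex recovers the correct minimum; both variants are valid and yield the same $\tilde O\left(n^{(3+\omega)/2}\right)$ bound.
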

The Dijkstra search is not affected by multiple edges, so the edge relaxation of low edges and high-low edges parts still works. For high-high edges, since we relax a set of edges $H'_{[x][1]}$ w.r.t. paths with smaller weights together, we can simply only keep the smallest edge when there are multiple edges in $H'_{[x][1]}$ between a pair of vertices.

This observation is helpful to handle edges of equal weights:
\begin{lemma} \label{equal}
Given a directed simple graph $G=(V,E)$, one can construct in $\tilde O(n^2)$ time a new directed graph $H=(V,E')$ (possibly having multiple edges) on the same vertex set $V$ with distinct edge weights, such that the APNP matrix of $G$ can be computed in $\tilde O(n^2)$ time given the APNP matrix of $H$ as input. The size of the new edge set $E'$ satisfies $|E'|\le 2|E|$. 

For undirected graph $G$ this statement also holds, and the constructed $H$ will be undirected as well.
\end{lemma}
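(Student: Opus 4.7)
\textbf{Proof plan for Lemma~\ref{equal}.}

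The plan is to build $H$ by duplicating every edge of $G$ with two carefully perturbed weights, yielding $|E'|=2|E|$. I would fix a linear ordering $\sigma:E\to\{1,\ldots,m\}$ of the edges, choose a scaling factor $M>2m$, and for each $e\in E$ with $G$-weight $w(e)$ introduce two parallel edges in $H$: a ``low'' copy $e^-$ of weight $M\,w(e)+\sigma(e)$, and a ``high'' copy $e^+$ of weight $M\,w(e)+2m+1-\sigma(e)$. All $2|E|$ resulting weights are pairwise distinct, copies of edges of different $G$-weights lie in disjoint intervals $[M w, M w+2m+1)$, and the original weight of any $H$-edge is recovered as $\lfloor \cdot/M\rfloor$. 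Constructing $H$ takes $\tilde O(m)=\tilde O(n^2)$ time, and from $OPT_H$ we would read off $OPT_G(i,j):=\lfloor OPT_H(i,j)/M\rfloor$ in $O(n^2)$ post-processing.

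To justify this identity I would first check the easy direction: every strictly increasing path in $H$ projects, by forgetting the perturbation, to a non-decreasing path in $G$ with the same last-edge original weight, so $OPT_G(i,j)\le\lfloor OPT_H(i,j)/M\rfloor$. The reverse inequality amounts to lifting every optimal non-decreasing $G$-path $e_1,\ldots,e_l$ to a strictly increasing path in $H$. Between edges of strictly increasing $G$-weight, the leading $M\,w(e)$ term dominates the perturbation, so any choice of copy works. Within a maximal run of equal-weight edges we must use $-$ or $+$ copies consistently; since every $-$ weight is smaller than every $+$ weight in the same slot, a valid lift exists iff the run splits into a prefix in which we use $-$ copies with $\sigma$-values increasing, followed by a suffix in which we use $+$ copies with $\sigma$-values decreasing. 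Equivalently, the $\sigma$-sequence along the run must be \emph{unimodal} (increasing then decreasing).

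The main technical task is therefore to construct a single global $\sigma$ such that every simple path inside every weight-class subgraph $G_w$ has a unimodal $\sigma$-sequence. I would handle each $w$ independently: compute the SCCs of $G_w$, order the condensation topologically, and number the edges inside each SCC by a DFS-based scheme in which tree edges are ascending and back edges account for the single permissible descent. Because any simple path in $G_w$ traverses SCCs in topological order and intersects each SCC in a contiguous stretch, the total number of $\sigma$-descents along the path is at most one. The undirected case is analogous, replacing SCCs with connected components, using a spanning-tree DFS numbering on each component, and exploiting the freedom to traverse each undirected edge in either direction to align the run with the chosen $\sigma$-order. The hard part will be rigorously establishing unimodality inside a non-trivial SCC: a naive DFS numbering can allow a simple path to accumulate more than one descent by weaving between tree and back edges, so the construction likely needs refinement via an ear decomposition or a canonical traversal of the SCC, and this combinatorial step is the one I would develop most carefully.
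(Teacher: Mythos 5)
There is a genuine gap, and it sits exactly at the step you deferred: a global ordering $\sigma$ making the $\sigma$-sequence of \emph{every} simple path inside every weight class unimodal does not exist in general, so no refinement via ear decompositions or canonical traversals can complete the argument. Take $G_w$ to be a directed $4$-cycle $v_1\to v_2\to v_3\to v_4\to v_1$ with edges $e_1,e_2,e_3,e_4$ all of the same weight $w$ (a legal simple digraph, and precisely the equal-weight situation the lemma must handle). Whatever $\sigma$ you pick, the edge of minimum $\sigma$-value, say $e_i$, is the middle edge of one of the four length-$3$ simple paths $(e_{i-1},e_i,e_{i+1})$ (indices cyclic), whose $\sigma$-sequence is then a valley $\sigma(e_{i-1})>\sigma(e_i)<\sigma(e_{i+1})$ and hence cannot be split into an increasing prefix followed by a decreasing suffix. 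For the corresponding pair, e.g.\ $(v_1,v_4)$ when $i=2$, every non-decreasing walk in $G$ stays inside this weight class and is of the form $e_1e_2e_3(e_4e_1e_2e_3)^t$; with your scheme (low copies at $Mw+\sigma$, high copies at $Mw+2m+1-\sigma$) an edge label can occur at most once in the increasing ``$-$'' prefix and once in the decreasing ``$+$'' suffix, and a short case check shows no $t\ge 0$ admits a valid choice of copies. So $H$ has \emph{no} strictly increasing path from $v_1$ to $v_4$ at all, while $OPT_G(v_1,v_4)=w$: the reduction is incorrect regardless of $\sigma$. The deeper issue is that your two copies per edge are forced into \emph{reversed} relative order inside each weight band, which makes realizability of a run equivalent to unimodality of its $\sigma$-sequence, and that requirement is simply too strong once $G_w$ contains a directed cycle of length at least $4$.

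The paper avoids this by not trying to preserve the edge structure of $G_w$ at all: within a single weight class only \emph{reachability} matters, so it replaces $G_w$ by a gadget $H_w$ with at most $2|E_w|$ fresh edges --- contract the strongly connected components, pick an assembly vertex in each, and add three tiers of strictly increasing weights (non-assembly $\to$ assembly inside each SCC, then condensation edges between assembly vertices in topological order of their tails, then assembly $\to$ non-assembly). Any path in $G_w$ then lifts to an increasing path $i\to i'\to\cdots\to j'\to j$ through assembly vertices, and conversely every increasing path in $H_w$ certifies reachability in $G_w$; merging the gadgets over all $w$ in weight order gives $H$. If you want to salvage a perturbation-style construction with two copies per edge, you would at least have to place both copies in the \emph{same} order inside the band (e.g.\ weights $Mw+\sigma(e)$ and $Mw+m+\sigma(e)$), where longer walks can rescue pairs like the one above, but then your unimodal-simple-path criterion no longer characterizes correctness and a new argument would be needed; the paper's reachability gadget is the cleaner route.
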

\begin{proof}
%Our construction guarantees that each non-decreasing path in the original graph corresponding to a strictly increasing path in the new graph. The answers to the original graph can be easily derived from answers to the new graph. This holds for both directed graphs and undirected graphs.

For a given number $w$, let $G_w=(V,E_w)$ denote the subgraph of $G$ consisting of all edges with weight exactly $w$. For all edge weight $w$, we construct in $O(|E_w|)$ time a new graph $H_w=(V,E_w')$ where  $|E_w'|\le 2|E_w|$ and all edge weights are distinct, such that: for any $s,t\in V$, $G_w$ contains a path from $s$ to $t$ if and only if $H_w$ contains a non-decreasing path (actually strictly increasing) from $s$ to $t$. Once this is shown, the proof immediately follows by merging all $H_w$'s into a single graph $H=(V,\, \bigcup_{w}E_w')$, assuming edges from $E'_x$ all have smaller weights than those from $E'_y$ if $x<y$.

Now we describe our construction of $H_w$ in undirected and directed cases respectively.

\paragraph*{Undirected case}

For every connected component $C=\{v_1,v_2,\dots,v_k\}$ in $G_w$,  we set $v_1$ as the "assembly vertex", and construct $2(k-1)$ undirected edges $(v_1,v_2),\dots,(v_1,v_k)$, $(v_k,v_1),\dots,(v_2,v_1)$ with increasing weights. So for each pair of (non-assembly) vertices $v_i,v_j$ in the same component, there exists a strictly increasing path $v_i\rightarrow v_1 \rightarrow v_j$ in between. We can treat all connected components of $G_w$ in arbitrary order.

\paragraph*{Directed case}

Let $V_w$ be the set of vertices associated with edges in $E_w$. We contract every strongly connected components (SCC) of $G_w$  into a big node, and thus obtain a directed acyclic graph (DAG) $G'$.
Suppose $G'$ contains $n'$ big nodes (SCCs of $G$) and $m'$ edges between different big nodes. Our construction of $H_w$ consists of $2(|V_w| - n') + m'\leq 2|E_w|$ edges with weights in strictly increasing order, described as below. 

Initially, $H_w$ contains no edge. We add edges one by one into $H_w$ in increasing order of their weights. For each strongly connected component $C=\{v_1, \dots, v_k\}$ in $G_w$, we select (arbitrarily) vertex $v_1$ as an assembly vertex. The first $|V_w| - n'$ edges we add are those edges from each non-assembly vertex to the assembly vertex in its component. Namely, for each component, we add $k - 1$ edges $(v_2,v_1) ,\dots, (v_k, v_1)$ with distinct weights. 

The next $m'$ edges we add comes from the edges in the DAG $G'$. We sort big nodes in $G'$ in topological order, and process each edge $(u,v) \in G'$ by the topological order of $u$. For $(u,v) \in G'$ from component $u$ to component $v$, we add $(u',v')$ to $H_w$, where $u',v'$ are the assembly vertices of SCCs $u$ and $v$, respectively. 

The last $|V_w| - n'$ edges we add are edges from each assembly vertex to every other vertex in its component. Namely, for each component $C=\{v_1, \dots, v_k\}$, we add $k - 1$ edges, $(v_1, v_2), (v_1, v_3), \dots, (v_1, v_k)$. 

Then for every path $i \rightarrow j$ in $G_w$, there is a corresponding increasing path $i \rightarrow i' \rightarrow j' \rightarrow j$ in the constructed $H_w$,  where  $i',j'$ are the assembly vertices of the components containing $i$ and $j$, respectively. The fact that an increasing path exists between $i'$ and $j'$ follows from the basic property of topological order.
\end{proof}

\end{document}